\DeclareMathOperator*{\argmaxA}{arg\,max} 
  \providecommand\BibTeX{{%
    \normalfont B\kern-0.5em{\scshape i\kern-0.25em b}\kern-0.8em\TeX}}}
\begin{document}

\title{A Framework to Prioritize Software Quality Practices for Machine Learning Systems}
\title{Best Practices for Machine Learning Systems: An Industrial Framework for Analysis and Optimization}

\author{Georgios Christos Chouliaras}
\affiliation{%
  \institution{Booking.com}
  \city{Amsterdam}
  \country{The Netherlands}
  \postcode{1017 CE}
}
\email{georgios.chouliaras@booking.com}

\author{Kornel Kiełczewski}
\affiliation{%
  \institution{Booking.com}
  \city{Amsterdam}
  \country{The Netherlands}
}
\email{kornel.kielczewski@booking.com}

\author{Amit Beka}
\affiliation{%
  \institution{Booking.com}
  \city{Tel Aviv}
  \country{Israel}
}
\email{amit.beka@booking.com}

\author{David Konopnicki}
\affiliation{%
  \institution{Booking.com}
  \city{Tel Aviv}
  \country{Israel}
}
\email{david.konopnicki@booking.com}

\author{Lucas Bernardi}
\affiliation{%
  \institution{Booking.com}
  \city{Amsterdam}
  \country{The Netherlands}
}
\email{lucejb@gmail.com}

\begin{abstract}
 In the last few years, the Machine Learning (ML) and Artificial Intelligence community has developed an increasing interest in Software Engineering (SE) for ML Systems leading to a proliferation of best practices, rules, and guidelines aiming at improving the quality of the software of ML Systems. However, understanding their impact on the overall quality has received less attention. Practices are usually presented in a prescriptive manner, without an explicit connection to their overall contribution to software quality. Based on the observation that different practices influence different aspects of software-quality and that one single quality aspect might be addressed by several practices we propose a framework to analyse sets of best practices with focus on quality impact and prioritization of their implementation. We first introduce a hierarchical Software Quality Model (SQM) specifically tailored for ML Systems. Relying on expert knowledge, the connection between individual practices and software quality aspects is explicitly elicited for a large set of well-established practices. Applying set-function optimization techniques we can answer questions such as what is the set of practices that maximizes SQM coverage, what are the most important ones, which practices should be implemented in order to improve specific quality aspects, among others. We illustrate the usage of our framework by analyzing well-known sets of practices. 
\end{abstract}

\keywords{machine learning system, system quality, best practices, software quality, quality model, software engineering, reliable, trustworthy}

\maketitle

\section{Introduction}
In Software Engineering, Software Quality Models (SQM) are central when it comes to achieving high quality software, as highlighted for example by \cite{discipline}: \textit{"A quality model provides the framework towards a definition of quality"}.
A Software Quality Model is the set of \textit{characteristics} and the
relationships between them that provides the basis for specifying quality requirements and evaluation \cite{ISO9126}. In practice, a SQM is a structured set of attributes describing the aspects that are believed contribute to the overall quality. Machine Learning (ML) systems have unique properties like data dependencies and hidden feedback loops which make quality attributes such as \textit{diversity}, \textit{fairness}, \textit{human agency} and \textit{oversight} more relevant than in traditional software systems \cite{Siebert2021}. This makes traditional quality models not directly applicable for ML applications. Moreover in recent years there has been a rise in the publication of best practices tailored for ML systems \cite{HiddenTechDebt}, \cite{Amershi}, \cite{Wujek2016BestPF}, \cite{Zhang_ml_testing}, \cite{Serban}, however understanding their impact on overall quality and the systematic prioritization for their adoption has not received enough interest. Improving the quality of ML systems, especially in an industrial setting where multiple ML systems are in production, does not only require a set of practices, but also a deep understanding of their contribution to specific aspects of the quality of the system, as well as criteria to prioritize their implementation due to their large number and high implementation costs. Without a systematic prioritization based on their contribution to each individual aspect of software quality, it is challenging for practitioners to choose the optimal practices to adopt based on their needs which might lead to limited adoption, undesired biases, inefficient development processes and inconsistent quality. The challenge lies on the fact that some best-practices have a narrow impact, strongly affecting a few specific quality aspects while others have wider impact affecting many aspects, which might lead to redundancy or gaps in the \textit{coverage} of the all the relevant quality aspects. Another challenge is that the importance of each quality aspect depends on the specific ML application, hence there is no single set of best-practices that satisfies the quality requirements of all ML applications. To address these challenges we introduce a reusable framework to analyse the contribution of a set of best practices to the quality of the system according to the specific needs of the particular application. The framework consists of a general-purpose Software Quality Model for ML Systems, expert-based representations of a large set of well established best-practices, and a criterion to assess a \textit{set} of best practices w.r.t. our SQM: the \textit{SQM Coverage Criterion}, which quantifies how many of the attributes receive enough attention from a given \textit{set} of best practices. Applying set optimization techniques we can answer questions such as what are the practices that maximize the coverage, which practices can be implemented to address specific quality aspects and which aspects lack coverage, among others.

Concretely, our contributions are the following: \textbf{1)} A general-purpose software quality model tailored for ML systems. \textbf{2)} A framework to analyse and prioritize software engineering best practices based on their influence on quality, with the flexibility to be adaptable according to the needs of each organization. \textbf{3)} We apply the proposed framework to analyze existing sets of best practices for ML systems and identify their strengths and potential gaps.

The rest of the paper is organized as follows. Section \ref{section:related} discusses related work with emphasis on Software Quality Models and software best-practices for ML systems, section \ref{section:sqm} introduces our Software Quality Model and describes its construction process. Section \ref{section:framework} introduces our best-practices analysis framework with details about its construction process and relevant algorithms. In section \ref{section:application} various best-practices sets are analysed using our framework, we present our findings and insights. Finally, section \ref{section:conclusion} summarizes our work and discuses limitations and future work. Appendices include all the details, such as proofs, extensive results, and computer code to facilitate reusability and repeatability of our framework.

\section{Related work}
\label{section:related}
\subsection{Software Quality Models for ML Systems}
Defining and measuring software quality is a fundamental problem and one of the first solutions came through the means of a software quality model in 1978 \cite{McCall}. Such models include general software \textit{characteristics} which are further refined into \textit{sub-characteristics}, which are decomposed into measurable software attributes whose values are computed by a metric \cite{Botella}. 

Software quality models developed until 2001 \cite{Boehm}, \cite{Grady}, \cite{Dromey}, \cite{ISO9126} are characterized as \textit{basic} since they make global assessments of a software product. Models developed afterwards, such as \cite{bertoa}, \cite{Alvaro}, \cite{ISO25012} are built on top of basic models and are specific to certain domains or specialized applications, hence are called \textit{tailored} quality models \cite{Miguel}. Such a quality model tailored for data products has been presented in \cite{ISO25012}.

Software for ML Systems exhibits differences when compared to traditional software such as the fact that minor changes in the input may lead to large discrepancies in the output \cite{Kurakin}. Moreover due to the dependencies on data, ML systems accumulate technical debt which is harder to recognize than code dependencies, which are identified via static analysis by compilers and linkers, tooling that is not widely available for data dependencies. Other peculiarities of ML systems include direct and hidden feedback loops where two systems influence each other indirectly \cite{HiddenTechDebt}. Additionally, software quality aspects such as \textit{fairness} and \textit{explainability} as well as legal and regulatory aspects which are relevant to ML software are not covered by existing software quality models \cite{Vogelsang}. Furthermore, existing quality attributes such as \textit{maintainability} and \textit{testability} need to be rethought in the context of ML software \cite{Horkoff}. All these peculiarities make existing software quality models only partially applicable to ML software. In \cite{Siebert2021} the authors present the systematic construction of quality models for ML systems based on a specific industrial use case. The authors focus on the process of constructing a quality meta model, identifying ML quality requirements based on the use case and instantiating a quality model that is tailored to the business application. In our work however, we introduce a general software quality model for ML systems that can be directly applied on a large set of industrial applications, without the need to go through a construction process. The key difference between our work and \cite{Siebert2021}, is that their main contribution a development process for quality models, while one of our main contributions is the quality model itself, which can be used with no or minimum modifications for a broad range of ML systems. This allows the usage of the same quality model for multiple use cases within an organization which reduces the effort of its adoption and allows to create a common communication language regarding the quality of the ML systems in the organization. In \cite{MYLLYAHO} the authors conclude that the majority of the studies on software quality for ML either adopt or extend the ISO 25010 Quality Model for software product quality \cite{iso25010}. They find though that there is no consensus on whether ISO 25010 is appropriate to use for AI-based software or which characteristics of AI-based software may be mapped to attributes of traditional quality models. Unlike other studies, we did not adopt or extend ISO 25010 but rather followed a systematic approach to build our quality model from scratch by adding quality sub-characteristics based on their relevance to ML systems.

\subsection{Software best-practices for ML Systems}
Best practices for increasing the quality of ML systems are presented in \cite{Rubric}, \cite{Amershi} and \cite{Wujek2016BestPF} however a systematic way to link the influence of the recommended practices to the software quality attributes of ML systems is not included. This makes it particularly challenging for ML practitioners to prioritize the adoption (or even understand the impact) of the large set of best practices based on the specific needs of their organizations. In \cite{Zhang_ml_testing} the authors present published ML practices targeting several testing properties (\textit{relevance}, \textit{robustness}, \textit{correctness}, \textit{efficiency}, \textit{security}, \textit{privacy}, \textit{fairness} and \textit{interpretability}) however their influence on quality aspects is not being studied. The authors in \cite{Serban} conducted a survey of ML practitioners from multiple companies and present the effect of various published ML practices on four categories (\textit{Agility}, \textit{Software Quality}, \textit{Team Effectiveness} and \textit{Traceability}). They present the importance of each practice for each of the categories, as perceived by the surveyed practitioners. However, these categories are generic, and in fact only two of them are directly related to software quality (\textit{Software Quality} and \textit{Traceability}), in contrast, we study the influence of each best practice on a full-blown general purpose Software Quality Model specifically built for ML system with fine-grained aspects such as \textit{testability} and \textit{deployability}. Furthermore, we study the influence on each quality aspect of the quality model when a \textit{set} of practices is applied, which is key to understand and prioritize best-practices since the overall impact is different depending on which other practices are also implemented. In \cite{LWAKATARE} the authors extracted challenges and solutions for large scale ML systems synthesized into four quality attributes: \textit{adaptability}, \textit{scalability}, \textit{safety} and \textit{privacy}. They categorized software practices based on the step on the ML lifecycle and the addressed quality attribute. A difference of this work with ours, is that in \cite{LWAKATARE} each practice targets a single quality attribute while its effect on multiple attributes is not explicitly studied. Even though there is work that studies the effect of practices on software quality \cite{Siebert2021}, \cite{LWAKATARE}, \cite{Serban} to the best of our knowledge, no study has been published about the interrelationship of software best-practices for ML Systems with multiple fine-grained quality attributes, nor about their prioritization in order to balance Software Quality and implementation costs.

\section{A software quality model for ML systems}
\label{section:sqm}

\subsection{The model}

A quality model determines which quality aspects are considered when evaluating the properties of a software product \cite{iso25010}. Our software quality model for ML systems comprises 7 quality \textit{characteristics} further divided into \textit{sub-characteristics}. Quality \textit{characteristics} are general properties of quality that comprise the fundamental factors, which cannot be measured directly. Each \textit{characteristic} consists of \textit{sub-characteristics}, which are concrete quality aspects that can be directly influenced and measured. A graphical illustration of our software quality model for ML systems is presented in tree-structure in Figure \ref{fig:qm}. We define quality \textit{characteristics} as follows:
\begin{description}
    \item \textbf{Utility} — The degree to which a machine learning system provides functions that meet stated and implied needs when used under specified conditions.
    \item \textbf{Economy} — The level of performance relative to the amount of resources used under stated conditions.
    \item \textbf{Robustness} — The tolerance to degradation by the machine learning system under consideration when exposed to dynamic or adverse events.
    \item \textbf{Modifiability} — The degree of effectiveness and efficiency with which a machine learning system can be modified to improve it, correct it or adapt it to changes in environment and in requirements.
    \item \textbf{Productionizability} — The ease of performing the actions required for a machine learning system to run successfully in production.
    \item \textbf{Comprehensibility} — The degree to which users and contributors understand the relevant aspects of a machine learning system.
    \item \textbf{Responsibility} — The level of trustworthiness of a machine learning system.
\end{description}
\begin{figure*}[htbp]
  \centering
  \includegraphics[width=397pt]{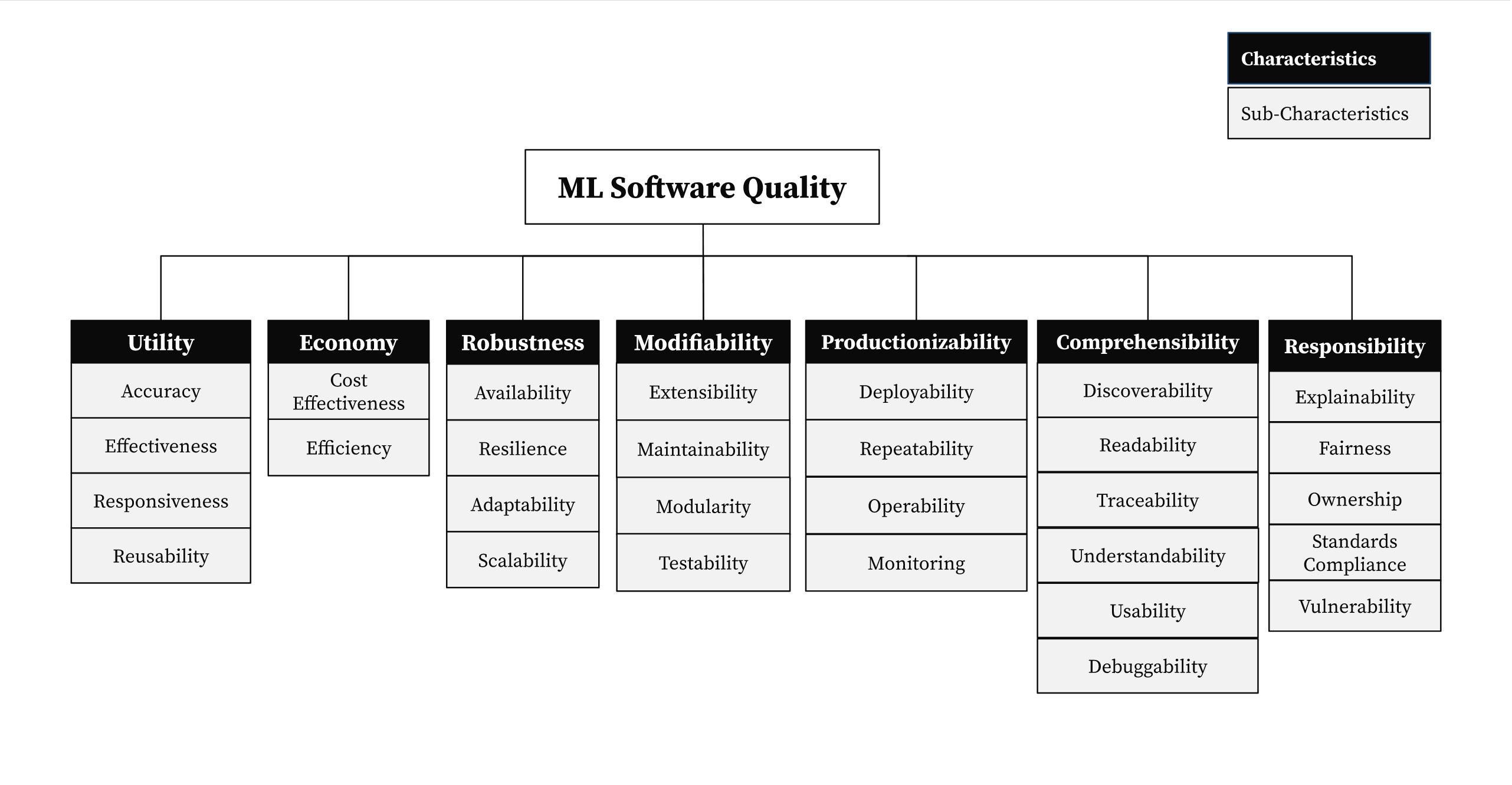}
  \caption{A software quality model for machine learning systems.}
  \label{fig:qm}
\end{figure*}
The definitions of all \textit{sub-characteristics} can be found in Appendix \ref{appendix:subchar_definitions}. Notice that there are no data quality attributes in the quality model, as these are defined in well established software quality models tailored for data \cite{ISO25012}. This existing data quality model can be used in addition to our software quality model, to analyze the quality of data which are used as input to an ML system.

\subsection{The development process}
We started by creating a list of the quality sub-characteristics to be included in our model. To achieve this, we went through the list of all the known system quality attributes in \cite{List_of_system_quality_attributes} and all software quality models in \cite{Miguel} from which we shortlisted and adapted the ones we judged applicable to machine learning systems. The shortlisting was done based on the relevance of each quality attribute to any stages of the ML development lifecycle defined in \cite{mllifecycle} and taking into account the various types of ML use cases e-commerce platforms like Booking.com has. Next, we added attributes related to machine learning that were not part of the initial list, such as \textit{fairness} and \textit{explainability} (as defined in Appendix \ref{appendix:subchar_definitions}). With the final list of attributes, we created clusters of factors (\textit{characteristics}) comprising related sub-factors (\textit{sub-characteristics}), following the standard nomenclature for quality models \cite{Miguel}.

We validated the completeness of our quality model using published sets of machine learning practices \cite{HiddenTechDebt}, \cite{Rubric}, \cite{Amershi}, \cite{Serban}, \cite{se_ml_website}. Concretely, we checked if we can relate these practices to at least one of the quality \textit{sub-characteristics} in our quality model. We iterated on this procedure a few times before we concluded on an first version, which was further refined using feedback from 10 internal senior ML engineers and scientists working in the industry and building ML systems for a minimum of 5 years. Given the speed with which the field is evolving, it is important to remark that the software quality model for machine learning is a live artifact constantly reviewed and updated in order to keep its relevance to the current machine learning needs. Another development process for a quality model for machine learning has been presented in \cite{Siebert2021}, in which the authors explain the implementation process of quality models for particular machine learning related use cases. Our development process aimed at creating a general-purpose quality model which is relevant for a wide range of machine learning applications. Different applications and organizations will put different emphasis onto different \textit{sub-characteristics} (for example external facing systems should be invulnerable even at the cost of accuracy) something that can be achieved by using importance weights per quality \textit{sub-characteristic}. Having a common quality model for all the machine learning systems allows its usage as a common language for quality related initiatives and for identification of gaps on quality attributes both at the system and organizational level.
\section{A framework to prioritize software practices}
\label{section:framework}
Choosing practices in order to improve ML quality is a challenging task mainly due to their large number, varying implementation costs, and overlapping effects. To tackle this, we propose a framework to analyze and prioritize software practices. Given a Software Quality Model represented by a set of sub-characteristics $C$, and a set of software best practices $P$ we want to choose a subset of practices maximizing the coverage of a given set of sub-characteristics, under a constraint of implementing at most $B$ practices \footnote{To simplify, we focus on the number of practices as cost function, but it is straightforward to extend to a general knapsack constraint \cite{Sviridenko} such as number of hours needed to adopt a practice.}. Having an influence $u(p, c)$ for a practice $p$ on a sub-characteristic $c$ we can define \textit{coverage} as a minimum threshold $k$ of influence. Formally we have:

\begin{enumerate}
    \item A Software Quality Model, represented by its set of \textit{sub-characteristics} $C$
    \item A set of software practices $P$
    \item For each practice $p \in P$ and each quality \textit{sub-characteristic} $c \in C$, the influence defined by a function $u: P \times C \rightarrow \mathbb{R}^{+}$
    \item A \textit{sub-characteristic} importance vector $w \in [0,1]^{|C|}$ representing the relevance of each \textit{sub-characteristic} $c \in C$
    \item An effort budget in the form of number of practices to be adopted $B \in \mathbb{N}$
    \item An integer $k$ representing the minimum influence necessary to consider any \textit{sub-characteristic} \textit{covered}
\end{enumerate}

We define the \textit{coverage function} as a set function that given a set of \textit{sub-characteristics} $C$ with importance weights $w$ and a coverage threshold $k$ maps a set of practices $X \in 2^P$ to a real number, formally:

\begin{equation}
\label{eq:coverage_function}
f(X; C, w, k) = \sum_{c \in C} w_c \min(k, \sum_{p \in X}  u(p,c))
\end{equation}

The objective is to choose a subset of practices that maximizes the coverage of the quality model weighted by its importance under the budget constraint: 
\begin{align}
     \argmaxA_{X \in 2^P} f(X; C, W, k) 
     \text{ subject to } |X| \leq B.
\end{align}

\subsection{Eliciting the relationship between best-practices and quality sub-characteristics}
\label{constructing}
In order to apply the framework in practice, we first needed a set of practices $P$. To achieve this, we conducted a survey with our internal ML practitioners at Booking.com where we asked them which 3 best practices for ML systems, from the ones they apply in their day to day work, they find the most useful. In total we received 25 responses from ML engineers and scientists with a minimum of 3 years of industrial experience building production ML systems. Based on the responses we created a list of 41 practices, which can be found in Appendix \ref{appendix:internal_practices}. Then, we obtained the values of the function $u(p, c)$ to be used as inputs in the framework by going through the following procedure.

We conducted a workshop with 13 internal ML practitioners (ML engineers and scientists with a minimum of 3 years of industrial experience building ML systems) who were given a lecture on the proposed Software Quality Model and had interactive exercises to ensure a deep understanding of all the quality sub-characteristics and their nuances. In the end of the workshop, the practitioners were given a quiz to assess their understanding. After the quiz, the practitioners were asked to score the set of 41 practices against each quality sub-characteristic ($C$) on a 0-4 scale indicating their influence: irrelevant (0), weakly contributes (1), contributes (2), strongly contributes (3) and addresses (4) \footnote{The scoring instructions can be found in Appendix \ref{appendix:scoring_instructions}.}. Finally by taking the median of the scores of all the practitioners we obtain the influence of each practice $p$ on each quality sub-characteristic $c$,  $u(p,c)$. To make this more concrete, we provide some examples of scores $u(p,c)$ for several pairs of quality sub-characteristic and practices in Table \ref{table:example_scores}. Influence scores for each sub-characteristic can be found in Appendix \ref{appendix:u_p_c_scores}.

Given the influence per practice and sub-characteristic $u(p,c)$ and a coverage threshold $k$, we can determine when a sub-characteristic is considered \textit{covered}. For example, given that we want to cover \textit{Understandability}, if $k=10$ then the practices \textit{documentation}, \textit{peer code review} and \textit{error analysis} with influence scores $u(p,c)$ of 4,3 and 3 respectively, do cover it. However the practices \textit{logging of metadata and artifacts}, \textit{data versioning} and \textit{alerting}, with influence scores of 2,1 and 0 respectively, do not cover \textit{Understandability}.

\begin{table}[!t]
\renewcommand{\arraystretch}{1.3}
\caption{Examples of Influence Scores $u(p,c)$}
\label{table:example_scores}
\begin{tabular}{ccc}
\textbf{Sub-characteristic} & \textbf{Practice}         & \textbf{Score} \\ \hline
Deployability   & Data Versioning & 0  \\
Repeatability & Documentation & 2 \\
Debuggability & Logging of Metadata And Artifacts & 3 \\
Traceability & Data Versioning & 3 \\
Understandability & Documentation & 4           
\end{tabular}
\end{table}

\subsection{Scaling of Influence Scores}
\label{scaling}

Based on ML practitioners' evaluation, four practices scored with an influence of \textit{weakly contributes} = 1  should not be treated equally as a practice scored with  \textit{addresses} = 4, hence to penalize weak contributions we re-scale the scores. To achieve this we chose a piecewise linear function where we define the \textit{addresses} influence score = 4*\textit{strongly contributes}, \textit{strongly contributes} = 3 * \textit{contributes}, \textit{contributes} = 2 * \textit{weakly contributes}. For continuous values, after averaging multiple ML practitioners scores, we apply a piecewise linear function between these values which we depict in Figure \ref{fig:scaling_function}.

We defined \textit{coverage} in Equation \ref{eq:coverage_function} as the minimum threshold of influence $k$. We chose one \textit{addresses} influence to \textit{cover} a \textit{sub-characteristic}, and after applying our re-scaling function we get $k = 24$. In general, the parameter $k$ defines the coverage threshold, and the re-scaling allows to parameterize the relationship of the influence scores while keeping the scoring of the \textit{sub-characteristic} and practice pairs on a small linear scale of $[0; 4] \in \mathbb{Z}^{0+}$.

The choice of $k$ and of the re-scaling function depend on the application where the ML System is deployed and on the risk of wrongly treating a \textit{sub-characteristic} as covered. 

\begin{figure}[H]
 \centering
 \includegraphics[width=0.4\textwidth]{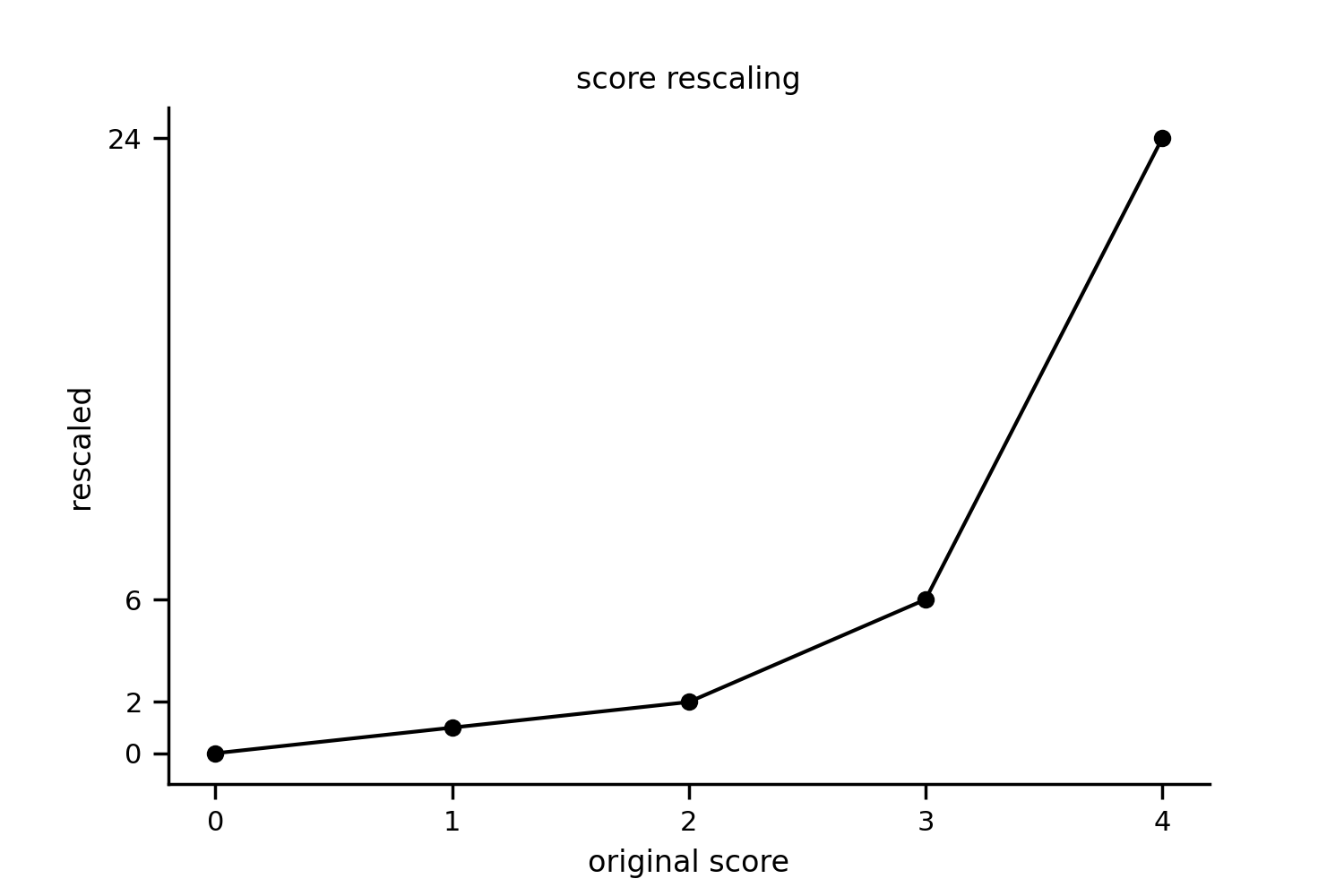}
 \caption{Scaling function for influence scores}
 \label{fig:scaling_function}
\end{figure}

\subsection{Inter-annotator Agreement}
\label{section:agreement}

Assessing the influence of a practice in a quality sub-characteristic is a subjective task and therefore subject to annotator disagreement. We used two tests for agreement - whether two scores are identical (referred as \textit{plain} agreement) and whether two scores differ by more than one level (referred as \textit{practical} agreement). The \textit{practical} test is more aligned with the complexity of the task and the variance coming from the practitioners experience and knowledge.
We found an average agreement rate (between a pair of annotators) of 73.56\% (plain) and 86.38\% (practical). We used Cohen's Kappa to check the agreement rate while neutralizing the probability of agreement happening by chance, and reached 0.4 (plain) and 0.69 (practical). These scores represent an agreement rate which is between fair (\textit{plain}) and substantial (\textit{practical}) according to \cite{Viera}.  

The observed consistency suggests that we can have new best practices sets (or new quality sub-characteristics), scored by substantially fewer practitioners, which we consider an important insight when it comes to adopting new practices in an industrial setting. For example, considering the case of only two annotators, we estimate the sampling distribution for both the agreement-rate and Kappa statistic by computing the metric for every possible pair of annotators among the 13. For the agreement rates, the standard deviation is 1.38\% (\textit{plain}) and 1.68\% (\textit{practical}), and for the Kappa statistic the standard deviation is 0.043 (\textit{plain}) and 0.05 (\textit{practical}). Both figures are low enough which enables us to substitute a large group of annotators with only a pair and still get reliable scores.

\subsection{Algorithms}
\label{section:algorithms}

The maximization problem we want to solve is similar to the Generalized Maximum Coverage (GMC) problem
\cite{GMC}, with a clear difference: in GMC if a set $X$ covers an element $a$, then at least one subset $Y \subset X$ covers $a$. In our case, if a set of practices $Q \subseteq P$ covers a sub-characteristic $c \in C$, it might be the case that no subset of $Q$ covers $c$. Consider two practices $p_1, p_2$ and sub-characteristic $c$ with $u(p_1, c) = u(p_2, c)=k/2$. In this case the set $Q=\{p1, p2\}$ covers $c$ since $f(Q;\{c\}, 1, k))=k$ but no subset of $Q$ does since $f(\{p1\};\{c\}, 1, k))=f(\{p2\};\{c\}, 1, k))=k/2$ and $f(\emptyset;\{c\}, 1, k))=0$. Because of this, a specific analysis is required.

The budget expressed as the maximum number of practices to be applied leads to a combinatorial explosion of the search space. To illustrate, the set of 41 practices we collected and a budget of 3 practices yields a search space of size $41 \choose 3$ = 10660, whereas a budget of 10 practices yields a search space of $1.12\mathrm{e}{+9}$ options to explore. To tackle this computational problem we propose a greedy solution based on the observation that $f$ is positive monotone submodular (proof in Appendix \ref{appendix:proof}). Maximizing a monotone submodular function is known to be NP-Hard \cite{NP_Article}, \cite{NP_Book}, however a simple greedy approach yields a $(1 - \frac{1}{e})$-approximation \cite{Nemhauser} even for one general knapsack constrain \cite{Sviridenko}, and it is the best polynomial time solution, unless $P=NP$ \cite{Nemhauser_best}, \cite{feige}. We propose two solutions: brute force and greedy, in Algorithm \ref{alg:brute} and \ref{alg:greedy} respectively. In practice we found that the greedy approach rarely yields sub-optimal results for this case.

\begin{minipage}{0.46\textwidth}

\begin{algorithm}[H]
\caption{Brute force search}\label{alg:brute}
  \small
\begin{algorithmic}[1]
\Require $B$: budget of practices to be used
\Require $W$: sub-characteristic importance vector
\Require $C$: set of sub-characteristics
\Require $P$: set of practices
\Require $k$: coverage threshold
\State $score \gets 0$
\State $selected \gets \emptyset$
\For{each possible subset $P_i \subseteq
 P$ of size $B$}
    \State $curr \gets f(P_i; C, W, k)$
    \If{$curr > score$}
        \State $curr \gets score$
        \State $selected \gets P_i$
    \EndIf
\EndFor
\State \Return $selected$
\end{algorithmic}
\end{algorithm}

\end{minipage}

\hfill
\algnewcommand{\LineComment}[1]{\State \(\triangleright\) #1}

\begin{minipage}{0.46\textwidth}

\begin{algorithm}[H]
\caption{Greedy algorithm}\label{alg:greedy}
    \small  
\begin{algorithmic}[1]
\Require $B$: budget of practices to be used
\Require $W$: sub-characteristic importance vector
\Require $C$: set of sub-characteristics
\Require $P$: set of practices
\Require $k$: coverage threshold
\State $S \gets \emptyset$ \Comment{set with selected practices}
\State $all\_covered \gets False$ \Comment{are all sub-chars covered?}
\State $cov \gets [0$ for each $c \in C]$ \Comment{track coverage for each sub-char}
\While{$|S| < B$ and $\neg all\_covered$ and $|P \setminus S|>0$} 
        \State $p_{s} \gets \argmaxA_{p \in P \setminus S } {f(S \cup \{p\}; C, W, k)}$ \Comment{find best practice }    
        \State $S \gets S \cup \{p_{s}\}$ \Comment{greedily add best practice to selection}
    \For {$c \in  C $} \Comment{recompute coverage with new selection}
        \State $cov[c] = f(S; \{c\}, W, k)$ 
    \EndFor
    \State $all\_covered \gets \sum{cov} == |C| k$ \Comment{are all sub-chars saturated?}
\EndWhile
\State \Return $S$
\end{algorithmic}
\end{algorithm}

\end{minipage}

\section{Applying the framework}
\label{section:application}
In this section we illustrate the usage of our framework by analyzing our own best-practices set and three well-known ML best-practices sets \cite{Rubric}, \cite{Amershi}, \cite{Serban} and \cite{se_ml_website} (we combine the last two as they intersect) including 28, 7, and 45 best practices respectively. In each case we compute the coverage function, optimal practices sets for different budgets, and highlight gaps as well as general trends. We also provide a \textit{global} analysis combining all sets of best practices. 

\subsection{Analyzing sets of best practices}
\label{sec:analyzing_best_practices}

\subsubsection{Internal Set}
    Using the influence vectors of the internal set of 41 practices applied at Booking.com, we can visualize the total contribution of the set to all the quality sub-characteristics and assess its completeness. We plot the contributions of the internal set in Figure \ref{fig:practices_internal}, where we mark the threshold $k=24$ contribution points indicating coverage of a quality sub-characteristic. We observe that 22 out of 29 sub-characteristics are being covered indicating a coverage rate of 75\%. The sub-characteristics with the largest influences are mostly associated with traditional software systems, such as \textit{effectiveness} and \textit{monitoring}, while the ones with the least influences are more specific to ML systems, such as \textit{explainability} and \textit{discoverability}. This is due to the fact that historically, engineering best practices are more closely related to traditional software systems and only in the recent years ML specific best practices started becoming popular. Based on this analysis we were able to identify the areas for which practices are lacking and work towards their coverage, by creating new ones. Concretely, to address the gaps in \textit{Vulnerability}, \textit{Responsiveness} and \textit{Discoverability} we created the following practices: 
 "Request an ML system security inspection", "Latency and Throughput are measured and requirements are defined", "Register the ML system in an accessible registry", which increase the coverage for each of the sub-characteristics respectively (see Appendix \ref{appendix:internal_practices} for their descriptions).

\begin{figure}[!t]
\centering
\includegraphics[width=0.48\textwidth]{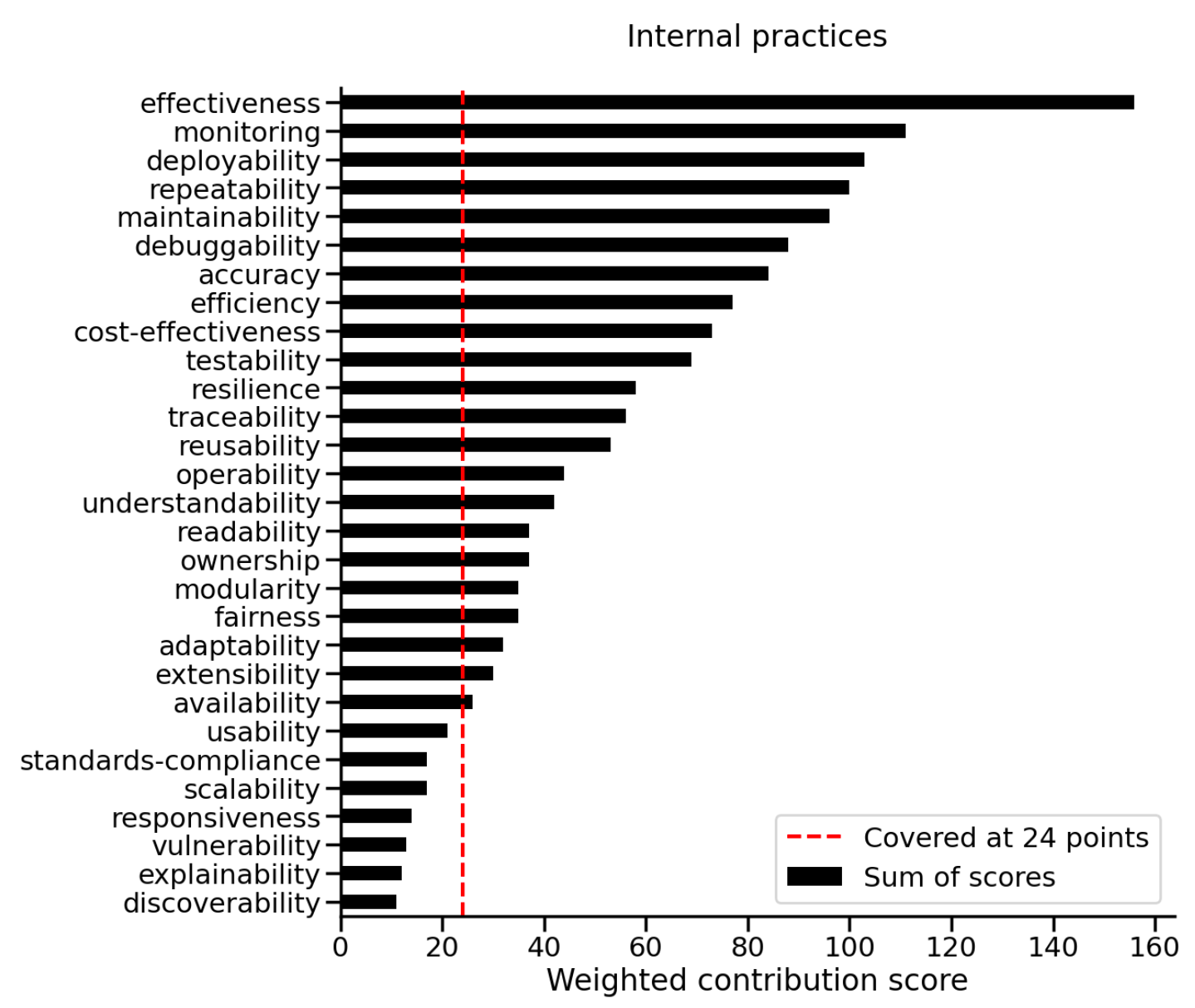}
\caption{Coverage of the quality sub-characteristics by applying all the 41 practices from the internal set.}
\label{fig:practices_internal}
\end{figure}

To gain further insight, we use the Greedy algorithm to find the top 3 influential practices on all quality sub-characteristics, considering them all equally important. The algorithm outputs a set of the following top 3 practices: "Write documentation about the ML system", "Write modular and reusable code", and "Automate the ML lifecycle". This result has been used to guide the ML practitioners at Booking.com on the prioritization of practice adoption in their daily work, by highlighting the value of these practices on the overall ML quality. The actual prioritization of their adoption depends on the team, since different teams and departments use different priorities for the quality sub-characteristics.   

\subsubsection{External Sets}
We analyze three ML best practices sets of 80 practices in total. Since it is impractical to have the same 13 ML practitioners scoring the 80 practices, we limit the number of annotators to 2, based on the high agreement rate for a pair of annotators observed in Section \ref{section:agreement}. After the scoring, we compute the plain agreement rate for the 2 annotators to be 63.5\% and the practical agreement rate 94.5\%. With these vectors, we can visualize the total contribution of the whole set of practices to each of the quality sub-characteristics and based on that assess which of them are being covered.  In Figure \ref{fig:practices_serban} we see that applying all the practices presented in \cite{Serban} 25 sub-characteristics are covered. In this set of practices the strongest emphasis is on sub-characteristics related to \textit{cost-effectiveness}, \textit{responsibility} and \textit{modifiability}. On the other hand, \textit{sub-characteristics} such as \textit{scalability}, \textit{discoverability}, \textit{operability} and \textit{responsiveness}, remain uncovered even when applying all the 45 practices from this set. Figure \ref{fig:practices_breck} illustrates the contributions by applying all the 28 practices mentioned in \cite{Rubric} and we observe that this set covers 17 \textit{sub-characteristics}: we observe the top contributions to be on non-ML specific quality \textit{sub-characteristics}, although ML specific ones such as \textit{accuracy} and \textit{fairness} are also covered. The least covered are related to collaboration such as \textit{ownership}, \textit{discoverability} and \textit{readability}. Lastly, the contributions of \cite{Amershi} to the software quality are depicted in Figure \ref{fig:practices_amershi}. This set of 7 practices manages to cover 9 quality \textit{sub-characteristics} with a focus on those related to \textit{economy} and \textit{modifiability}. The least contributions are achieved on aspects related to the \textit{comprehensibility} of ML systems. In general we find that all practice sets focus on different quality attributes and have gaps on different areas of our SQM. This indicates that the sets complement each other, which motivates our next analysis.

\begin{figure*}[!t]
  \centering
  \begin{subfigure}{0.48\textwidth}
         \centering
         \includegraphics[width=\textwidth]{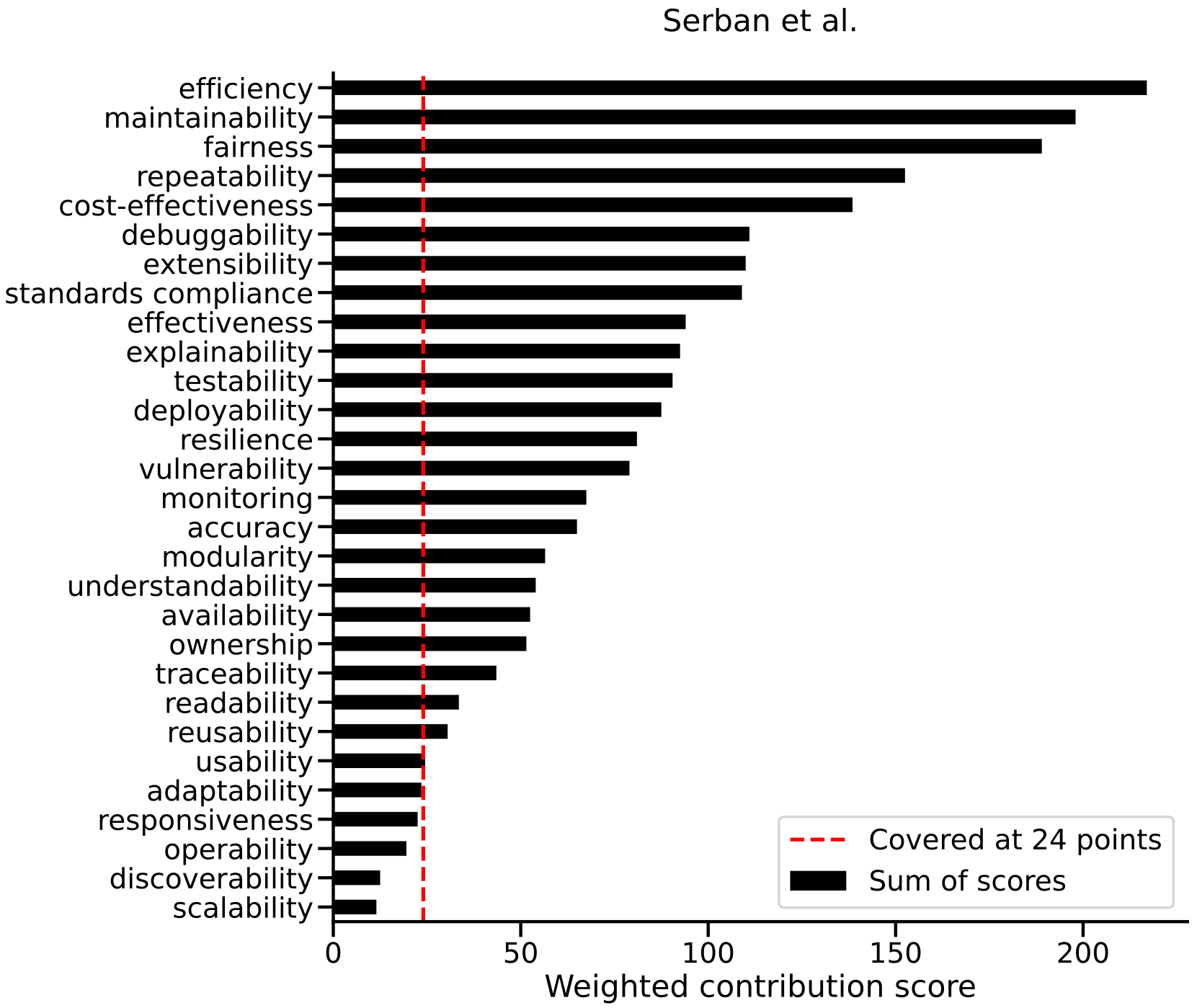}
         \caption{}
         \label{fig:practices_serban}
  \end{subfigure}
     \hfill
     \begin{subfigure}{0.48\textwidth}
         \centering
        \includegraphics[width=\textwidth]{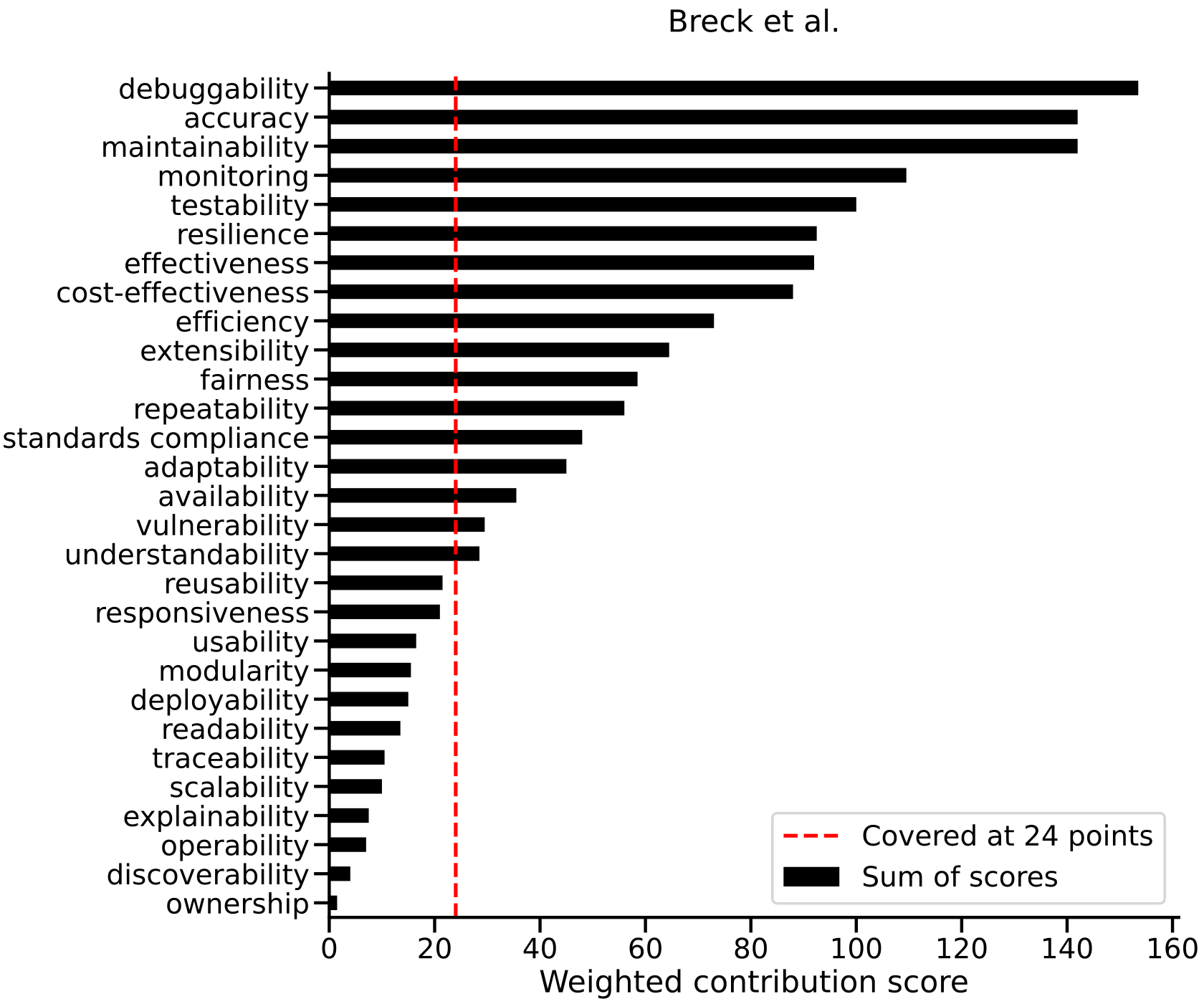}
         \caption{}
         \label{fig:practices_breck}
    \end{subfigure}
    \hfill
    \begin{subfigure}{0.48\textwidth}
         \centering
  \includegraphics[width=\textwidth]{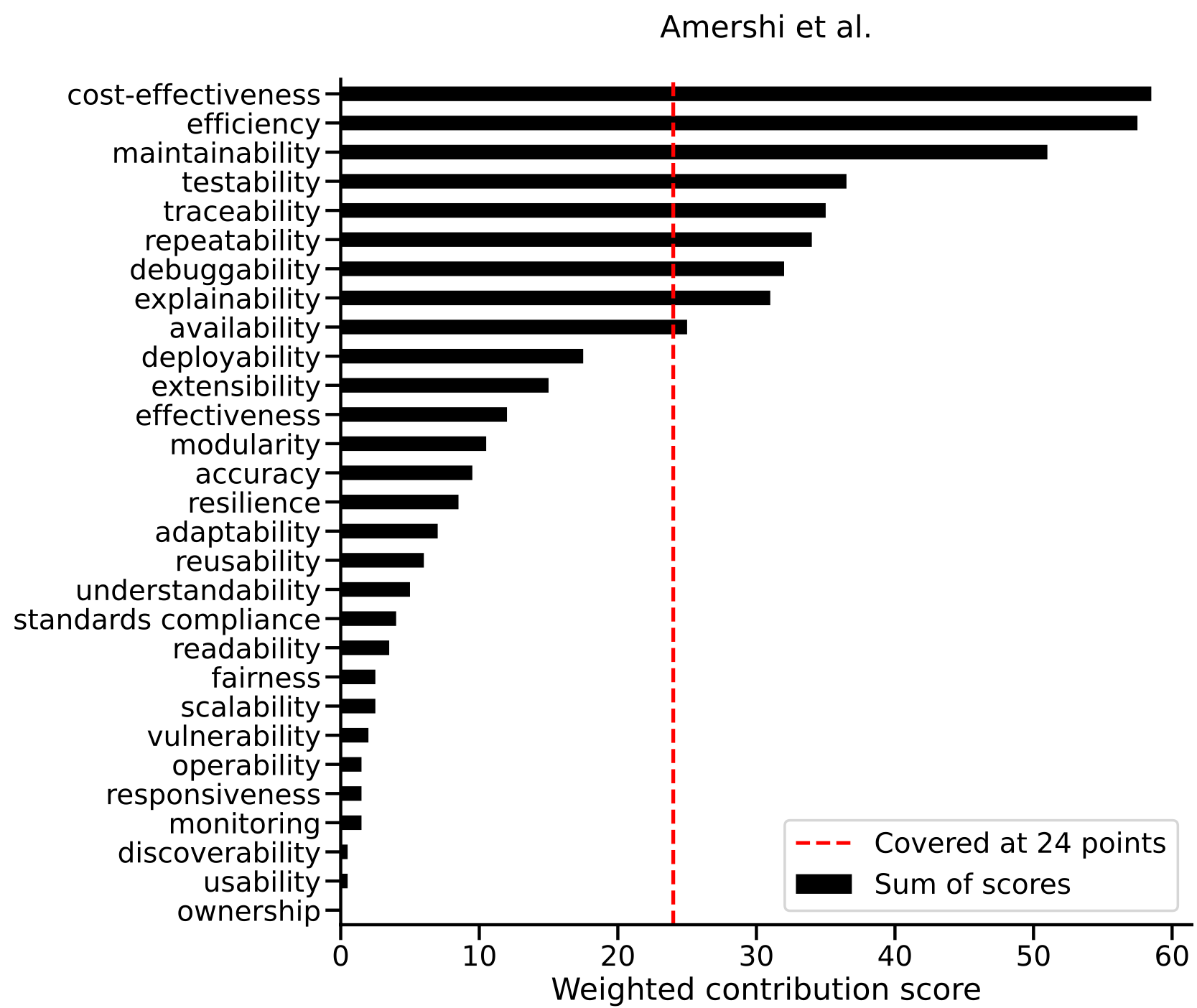}
        \caption{}
         \label{fig:practices_amershi}
    \end{subfigure}
    \hfill
    \begin{subfigure}{0.48\textwidth}
         \centering
    \includegraphics[width=\textwidth]{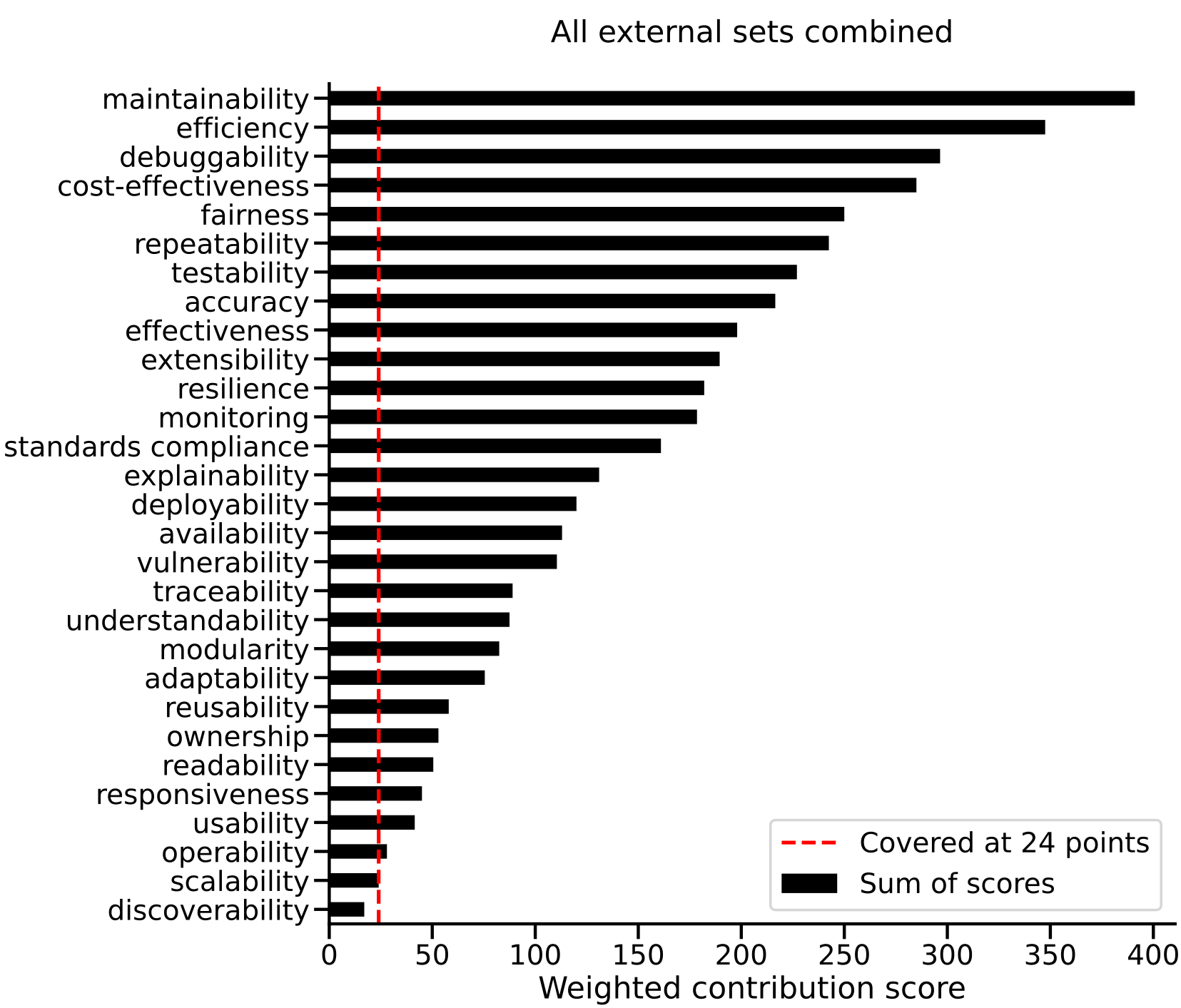}
         \caption{}
         \label{fig:practices_all}
    \end{subfigure}
  \caption{Coverage of the quality sub-characteristics by applying all the best practices in \cite{Serban} (\ref{fig:practices_serban}), \cite{Rubric} (\ref{fig:practices_breck}), \cite{Amershi} (\ref{fig:practices_amershi}) and the combined set (\ref{fig:practices_all}).}
  \label{fig:practices_analysis}
\end{figure*}

In Figure \ref{fig:practices_all} we look into the quality coverage in the scenario where we apply all the practices combined. After removing overlapping practices (see Appendix \ref{appendix:overlapping}), this set includes 76 practices. We observe that when we apply the full set of 76 practices, 28 sub-characteristics are covered which verifies that the practices complement each other. An example that shows this is \textit{scalability}, which is not covered by any set in isolation, but only when the practices are combined. We also see that even when applying all the 76 practices, \textit{discoverability} remains uncovered. This shows that there is lack of practices addressing this quality sub-characteristic, something that was also observed in the analysis of the internal practice set. Moreover, the low scores for sub-characteristics like \textit{scalability}, \textit{operability}, \textit{usability} and \textit{responsiveness} indicate that they receive less attention compared to the rest. On the other hand, it is encouraging to see large scores for sub-characteristics related to trustworthiness such as \textit{fairness} and \textit{explainability}.  

\subsection{Score and coverage threshold sensitivity}
\label{sec:score_sensitivity}

To further assess the sensitivity of the results to the scores assigned by the ML practitioners, we perturb the scores by adding a random integer in the range $[-1; 1]$ and $[-2; 2]$. We then take the original scores and perturbed ones, and compute the scores of each \textit{sub-characteristic} as if all practices were applied and rank them by the sum of scores. Then we measure the Pearson correlation coefficient of the original ranking and the ranking after the scores were perturbed. After 1000 perturbation iterations we obtain a mean correlation coefficient of 0.94 with a variance of 0.0002 for perturbing by $[-1; 1]$, and a mean of 0.91 with a variance of 0.0006 for perturbing by $[-2; 2]$ respectively. A random integer in the range $[-3; 3]$ yields a mean of 0.86 and a variance of 0.0016. This shows that our results are robust to scoring variance. Regarding the coverage threshold $k$ we remark that 24 points is rather low since one single practice with \textit{addresses} score would cover the sub-characteristic, at the same time, in Figures \ref{fig:practices_internal} and \ref{fig:practices_analysis} we can see that small changes in $k$ do not lead to big changes in which quality sub-characteristics are covered, more importantly, the general observations hold even for moderate changes in $k$.

\subsection{How many practices are enough?}
\label{sec:how_many_practices}

To evaluate how many practices are enough to maximize quality, we analyze the internal and open source sets combined (after removing overlapping practices the combined set has 101 practices, see Appendix \ref{appendix:overlapping} for details). Using our prioritization framework we find the minimum number of practices which cover the same number of quality \textit{sub-characteristics} as the full set of those 101 practices. To achieve that, we find the top $N$ practices from the combined set of practices using our greedy algorithm (brute force takes too long), for $N \in [1,101]$ and we evaluate what percentage of the quality \textit{sub-characteristics} is being covered with each set of practices. Figure \ref{fig:percent_covered} illustrates the coverage percentage for all the values of $N$. We see that applying 5 practices covers almost $40\%$, 10 cover $70\%$, and to reach $96\%$, 24 are needed. The coverage does not increase further with the practices. This result shows that using a relatively small number of practices can achieve similar results in terms of quality coverage to the full set of 101 practices. This means that when applying the right set of practices, a significant reduction in the effort of adoption can be achieved, which is especially relevant in an industrial setting.

\begin{figure}[h]
  \centering
  \includegraphics[width=0.4\textwidth]{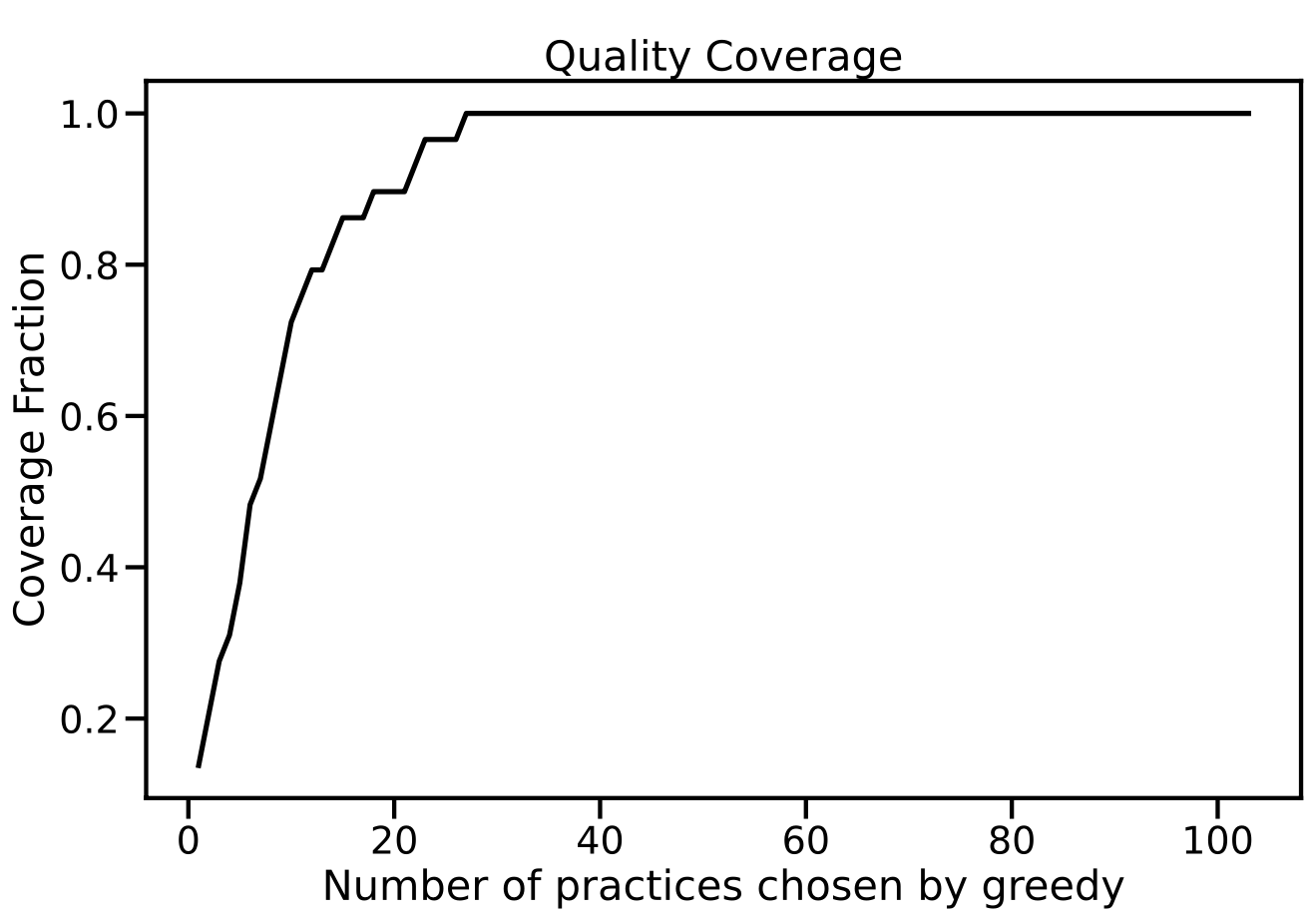}
  \caption{Coverage of the quality sub-characteristics for an increasing budget of practices. The practices to be applied are chosen using the greedy algorithm (\ref{alg:greedy}), on the combined set of open source and internal practices for ML systems.}
  \label{fig:percent_covered}
\end{figure}

\subsection{Which are the best practices?}
\label{section:which_are_the_best}

To gain further insights as to which are the 24 practices which maximize coverage, we provide the optimal set in Table \ref{table:practices_max_quality}, along with the source of each practice (some practices have been renamed for better clarity, see Appendix \ref{appendix:explanation_of_best_practices} for details). It is important to note that here we assume equal importance for each quality sub-characteristic, something that needs to be taken into account from ML practitioners wanting to use this set as guidance. In case a different importance weighting is desired, one needs to re-create this set after applying importance weights to each sub-characteristic. Prioritization within the final set, can be achieved by taking into account the specific needs of an organization (for example if safety is top priority, practices focusing on robustness should be prioritized) or the cost of adoption per practice. 

\begin{table}[]
\caption{The practices which maximize quality.}
\label{table:practices_max_quality}
\begin{tabular}{l}
\textbf{Practice Name \& Source}                                                    \\
Versioning for Data, Model, Configurations and Scripts \cite{Serban}   \\
Continuously Monitor the Behaviour of Deployed Models               \cite{Serban}   \\
Unifying and automating ML workflow                                 \cite{Amershi}  \\
Remove redundant features.                                      \cite{Rubric}   \\
Continuously Measure Model Quality and Performance                  \cite{Serban}   \\
All input feature code is tested.                                   \cite{Rubric}    \\
Automate Model Deployment                                           \cite{Serban}  \\
Use of Containarized Environment                                [Appx. \ref{appendix:internal_practices}] \\
Unified Environment for all Lifecycle Steps                          [Appx.  \ref{appendix:internal_practices}] \\
Enable Shadow Deployment                                            \cite{Serban}   \\
The ML system outperforms a simple baseline.                                      \cite{Rubric}    \\
Have Your Application Audited                                       \cite{Serban}   \\
Monitor model staleness                                           \cite{Rubric}     \\
Use A Collaborative Development Platform                            \cite{Serban}   \\
Explain Results and Decisions to Users                              \cite{Serban}   \\
The ML system has a clear owner.                                     [Appx.  \ref{appendix:internal_practices}] \\
Assign an Owner to Each Feature and Document its Rationale          \cite{Serban}   \\
Computing performance has not regressed.                            \cite{Rubric}   \\
Communicate, Align, and Collaborate With Others                     \cite{Serban}   \\
Perform Risk Assessments                                            \cite{Serban}  \\
Peer Review Training Scripts                                        \cite{Serban}   \\
Establish Responsible AI Values                                     \cite{Serban}   \\
Write documentation about the ML system.                  [Appx.  \ref{appendix:internal_practices}] \\
Write Modular and Reusable Code                                      [Appx.  \ref{appendix:internal_practices}]
\end{tabular}
\end{table}

\subsection{Further applications of the framework}
\label{section:usages}
The proposed SQM is currently being used to construct a quality assessment framework for ML systems. Concretely, the framework assesses the coverage of each quality sub-characteristic on an ML system level, to pinpoint improvement areas. Implementing an ML quality assessment framework without an SQM for ML systems, would lead to an incomplete picture of ML quality. Moreover, the prioritization framework is being used alongside the quality assessment framework: After the quality of an ML system is assessed, by assigning a quality score per quality sub-characteristic, the sub-characteristics with low scores are provided as input in the prioritization framework in order to recommend the best 3 practices to apply in order to cover them. This has been very helpful for ML practitioners as it allows them to prioritize the improvements to be made efficiently, by focusing on practices that have the largest influence in the quality attributes that are considered the most important for the use case at hand. 

Additionally, the SQM has created a common language for ML practitioners to discuss ML quality topics and quality related initiatives are easier to be justified. For example, it is more straightforward to argue about the value of an initiative targeting to increase the adoption of unit-testing for ML systems, since the benefit of it, e.g. improvement in \textit{modifiability} of the system, is clear.

An advantage of our framework is that it is flexible enough to be adapted to other organizations. For completeness, we describe how this can happen. The organization needs to determine which quality \textit{sub-characteristics} are the most crucial, by specifying the importance weights $W$ for each \textit{sub-characteristic}. The provided software practices can be used as is or new ones can be added and scored by ML practitioners within the organization. Lastly, a coverage threshold $k$ should be chosen based on how strict an organization wants to be for solving a given quality \textit{sub-characteristic}. To deal with disagreements in the scores $u(p,c)$ or the coverage threshold $k$, the mean or median can be taken. Then, all an ML practitioner needs to do is to run the prioritization algorithm using as inputs the quality \textit{sub-characteristics} $C$ to be improved, the set of practices $P$ to be considered, the allowed budget $B$, the importance vectors $W$ and the coverage threshold $k$, and then adopt the optimal practices which are recommended by the framework. 

\section{Conclusions and Discussions}
\label{section:conclusion}

\textbf{Conclusion.}
In this work we presented a framework to analyse the relationship between software engineering best practices for ML Systems and their quality with the primary purpose of prioritizing their implementation in an industrial setting. We addressed the challenge of defining quality by introducing a novel Software Quality Model specifically tailored for ML Systems. The relationship between best practices and the various aspects of quality was elicited by means of expert opinion and represented by vectors over the sub-characteristics of the Software Quality Model. With these vectors we applied Set Optimization techniques to find the subset of best practices that maximize the coverage of the SQM. We applied our framework to analyse 1 large industrial set of best practices as implemented at Booking.com and 3 public sets. Our main findings are:
\begin{enumerate}
    \item Different best-practices sets focus on different aspects of quality, reflecting the priorities and biases of the authors.
    \item Combining the different best-practices sets, high coverage is achieved, remarkably, aspects that no single best-practices set covers on its own are covered by integrating different practices proposed by different authors.
    \item Even though there is a proliferation of best practices for ML Systems, when chosen carefully, only a few are needed to achieve high coverage of all quality aspects.
    \item Even though the influence of best-practices on quality aspects is a subjective concept we found surprisingly high consistency among experts.
\end{enumerate}
Our framework was useful to spot gaps in our practices leading to the creation of new ones to increase the coverage of specific quality aspects.

\textbf{Limitations.}
A limitation of this work is that in order to add a new quality \textit{sub-characteristic} or a new practice to the framework, one needs to score the influence vectors which is a time consuming procedure. On the other hand, the addition or removal of an existing practice or quality \textit{sub-characteristic} does not influence the existing scores. Another caveat regards the subjectivity of the influence vectors based on the individuals who conduct the scoring. However, our sensitivity analysis described in Section \ref{sec:score_sensitivity} indicates that our results are robust to scoring variance, which mitigates the subjectivity concerns. 

\textbf{Future Work.}
Future work will focus on a comparison of our framework with baseline prioritization approaches (such as prioritizing the most popular practices first or the ones requiring the least effort) and on assessing the coverage of \textit{sub-characteristics} in existing ML Systems. We will also keep evolving the assessment framework mentioned in Section \ref{section:usages} since this can provide visibility on quality gaps of ML systems, and along with the prioritization framework can provide guidance to ML practitioners on the optimal actions to take to improve them. Furthermore, exploring more realistic practice implementation cost functions can lead to a better cost and quality trade-off. Lastly, even though we aim at producing a complete software quality model, further validation is necessary especially by the external ML community.

\begin{acks}
We would like to thank the ML practitioners of Booking.com for being very helpful with providing input for important components of this work: the scoring of the influence vectors, the survey of best practices, the feedback on the SQM and the prioritization framework. Quality improvements can only be done successfully in collaboration with the practitioners, and without their help this work would not be possible.
\end{acks}

\bibliographystyle{ACM-Reference-Format}
\bibliography{prioritization_framework}

\appendix

\section{Proof of submodularity of the Coverage function}
\label{appendix:proof}
We prove the submodularity of the coverage function defined in equation \ref{eq:coverage_function}. We will first prove that linear combinations of submodular functions with positive weights are also submodular and then we prove that the coverage function for one single sub-characteristic with weight one is also submodular.

\begin{lemma}
\label{lemma:lin}
Given a finite set $\Omega$, $n \in \mathbb{Z}$, $n+1$ positive monotone submodular functions $f_i$ ($0\leq i \leq n $), $f_i: 2^{\Omega} \rightarrow \mathbb{R}$ then any function $g: 2^{\Omega} \rightarrow \mathbb{R}$ defined as  $g(X) = \sum_i^n w_i f_i(X)$ with $X \in 2^{\Omega}$ and real weights $w_i \geq 0$ then $g$ is submodular.

\end{lemma}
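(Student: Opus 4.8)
\textbf{Proof plan for Lemma \ref{lemma:lin}.}
The plan is to work from the ``diminishing returns'' characterization of submodularity rather than from the $f(A)+f(B) \ge f(A\cup B)+f(A\cap B)$ form, since the former interacts most cleanly with linearity. Recall that a set function $h: 2^\Omega \to \mathbb{R}$ is submodular if and only if for every pair $A \subseteq B \subseteq \Omega$ and every element $x \in \Omega \setminus B$ we have $h(A \cup \{x\}) - h(A) \ge h(B \cup \{x\}) - h(B)$. First I would fix an arbitrary such triple $(A, B, x)$ and write down this inequality for each $f_i$, which holds by hypothesis since each $f_i$ is submodular.

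Next I would multiply the $i$-th inequality by $w_i$. Because $w_i \ge 0$, the direction of the inequality is preserved, giving $w_i\bigl(f_i(A \cup \{x\}) - f_i(A)\bigr) \ge w_i\bigl(f_i(B \cup \{x\}) - f_i(B)\bigr)$ for every $i$ with $0 \le i \le n$. Summing these $n+1$ inequalities (a finite sum of inequalities pointing the same way) and using the definition $g(X) = \sum_{i=0}^n w_i f_i(X)$ together with the fact that the sum and the finite-difference operator commute, I obtain $g(A \cup \{x\}) - g(A) \ge g(B \cup \{x\}) - g(B)$. Since $(A, B, x)$ was arbitrary, $g$ satisfies the diminishing-returns condition and is therefore submodular.

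For completeness I would also note — since the statement in the body text invokes positive \emph{monotone} submodularity of the coverage function — that monotonicity transfers by the same argument: for $A \subseteq B$, each $f_i(A) \le f_i(B)$, so $w_i f_i(A) \le w_i f_i(B)$, and summing gives $g(A) \le g(B)$; non-negativity of $g$ likewise follows from non-negativity of each $w_i$ and each $f_i$. There is essentially no obstacle here: the only point requiring care is the sign condition $w_i \ge 0$, which is exactly what guarantees that multiplying through by $w_i$ does not flip the inequalities, and this is precisely the hypothesis we need when we later instantiate $f_i$ as the single-sub-characteristic coverage terms and $w_i$ as the importance weights $w_c \in [0,1]$.
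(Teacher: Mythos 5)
Your proposal is correct and takes essentially the same approach as the paper: fix the defining submodularity inequality for each $f_i$, multiply by $w_i \geq 0$ (which preserves the direction), sum over $i$, and invoke linearity of $g$. The only cosmetic difference is that you work with the diminishing-returns characterization $h(A \cup \{x\}) - h(A) \geq h(B \cup \{x\}) - h(B)$ while the paper uses the equivalent lattice form $h(X) + h(Y) \geq h(X \cup Y) + h(X \cap Y)$; both are standard and the argument is identical in substance.
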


\begin{proof}

$\forall {X,Y \subseteq \Omega}$ and we want to prove $ g(X) + g(Y) \geq g(X \cup Y) + g(X \cap Y)$ (submodularity).

Because $f_i$ is submodular we have, $\forall i \in [0, n]$:

\begin{equation*}
    f_i(X)+f_i(Y) \geq f_i (X \cup Y) + f_i (X \cap Y)
\end{equation*}

Multiply by $w_i \geq 0$:

\begin{equation*}
    w_i f_i(X)+ w_i f_i(Y) \geq w_i f_i (X \cup Y) + w_i f_i (X \cap Y)
\end{equation*}

Then in particular expanding for all $i$: 
\begin{align}
   w_0 f_0(X)+ w_0 f_0(Y) &\geq w_0 f_0 (X \cup Y) + w_0 f_0 (X \cap Y) \notag  \\
   w_1 f_1(X)+ w_1 f_1(Y) &\geq w_1 f_1 (X \cup Y) + w_1 f_1 (X \cap Y)  \notag   \\
   \vdots \notag	\\
   w_n f_n(X)+ w_n f_n(Y) &\geq w_n f_n (X \cup Y) + w_n f_n (X \cap Y) \notag
\end{align}

Summing all inequalities

\begin{small}
\begin{align*}
  \sum_i^n (w_i f_i(X)+ w_i f_i(Y)) &\geq \sum_i^n (w_i f_i (X \cup Y) + w_i f_i (X \cap Y)) \notag
\end{align*}
\begin{equation*}
  \sum_i^n w_i f_i(X)+ \sum_i^n w_i f_i(Y) \geq  
  \sum_i^n w_i f_i (X \cup Y) + \sum_i^n w_i f_i (X \cap Y)           \notag
\end{equation*}
 \end{small}
 Using definition of $g$ we have

\begin{align*}
  g(X) + g(Y) &\geq g(X \cup Y) + g(X \cap Y)
\end{align*}  

This is the definition of submodularity for function $g$ which proves the lemma.

\end{proof}
In the particular case where we have one single sub-characteristic $c$ with weight 1, we can write $f$ as follows:
\begin{equation}
\label{eq:fc}
f(X) = \min(k, \sum_{p \in X}  u(p,c))
\end{equation}
We now focus one proving that the coverage function $f$ is submodular for the specific case of one single sub-characteristic, then, from lemma \ref{lemma:lin} it follows that the general coverage function $f$ as defined in equation  \ref{eq:coverage_function} is also submodular.

\begin{theorem}
\label{theorem:min}

The coverage function $f$ as defined in Equation \ref{eq:coverage_function} (for one sub-characteristic) is submodular.

\end{theorem}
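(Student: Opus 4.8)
The plan is to exploit the fact that $f$ restricted to a single sub-characteristic, as written in Equation \ref{eq:fc}, is the composition of a concave non-decreasing scalar function with a non-negative \emph{modular} set function, and such a composition is well known to preserve submodularity. Concretely, I would write $g(X) = \sum_{p \in X} u(p,c)$ and $\phi(t) = \min(k,t)$, so that $f(X) = \phi(g(X))$. Since $u(p,c) \in \mathbb{R}^{+}$ for every $p$, the function $g$ is modular (hence submodular) and monotone non-decreasing, and since $\phi$ is non-decreasing, $f$ is monotone; moreover $f(X) = \min(k, g(X)) \geq 0$ because $g(X) \geq 0$, so $f$ is positive. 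It therefore remains only to establish submodularity.

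For that I would use the equivalent \emph{diminishing returns} characterization: $f$ is submodular iff for all $X \subseteq Y \subseteq P$ and all $e \in P \setminus Y$,
\[
f(X \cup \{e\}) - f(X) \;\geq\; f(Y \cup \{e\}) - f(Y).
\]
Fixing such $X$, $Y$, $e$ and setting $\alpha = g(X)$, $\beta = g(Y)$, $d = u(e,c)$, the inclusion $X \subseteq Y$ together with non-negativity of all influence scores gives $0 \leq \alpha \leq \beta$, and also $d \geq 0$. The two marginal gains are $\min(k,\alpha + d) - \min(k,\alpha)$ and $\min(k,\beta+d)-\min(k,\beta)$, so the claim reduces to showing that the scalar map $h(t) = \min(k, t+d) - \min(k,t)$ is non-increasing on $[0,\infty)$: then $\alpha \leq \beta$ yields $h(\alpha) \geq h(\beta)$, which is exactly the desired inequality.

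The only place any actual work is needed is the elementary verification that $h$ is non-increasing, which I would carry out by a short case split on the position of $t$ relative to $k-d$ and $k$: for $t \leq k-d$ one has $h(t) = d$; for $k-d \leq t \leq k$ one has $h(t) = k-t$, decreasing from $d$ to $0$; and for $t \geq k$ one has $h(t) = 0$. Hence $h$ is constant, then decreasing, then constant — in particular non-increasing — which is just the statement that $\phi = \min(k,\cdot)$ is concave. I do not expect a genuine obstacle here; the main care needed is to keep this case analysis clean, including the degenerate cases $d = 0$ and $\alpha, \beta \geq k$. Finally, submodularity of $f$ for a single sub-characteristic combines with Lemma \ref{lemma:lin}, applied to the non-negative weights $w_c$, to give submodularity of the full coverage function of Equation \ref{eq:coverage_function}, completing the argument.
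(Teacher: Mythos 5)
Your proposal is correct, and it reaches the same conclusion as the paper by a genuinely different route. The paper proves Theorem \ref{theorem:min} by working directly with the set function: it fixes $X$ and two elements $a,b$, defines the two marginal gains $\Delta a$ and $\Delta ba$ as in \ref{eq:definition_delta_a}--\ref{eq:definition_delta_ba}, establishes monotonicity as a separate lemma (Lemma \ref{lemma:f_union_x_y}), and then runs an exhaustive case split on which of $f(X)$, $f(X\cup\{a\})$, $f(X\cup\{b\})$, $f(X\cup\{a,b\})$ are saturated at $k$, verifying $\Delta a \geq \Delta ba$ in each case. You instead factor $f = \phi \circ g$ with $g$ modular and $\phi(t)=\min(k,t)$ concave non-decreasing, push the whole problem down to the scalar level, and reduce submodularity to the single observation that $h(t)=\min(k,t+d)-\min(k,t)$ is non-increasing in $t$ --- i.e.\ to the concavity of $\phi$. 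The two arguments are morally the same (the paper's saturation cases correspond exactly to your three regimes $t\leq k-d$, $k-d\leq t\leq k$, $t\geq k$), but yours collapses a multi-way case analysis over four set-function values into a one-dimensional monotonicity check, which is shorter, harder to get wrong, proves the stronger $X\subseteq Y$ form of diminishing returns directly rather than the single-element form \ref{eq:alt_sub_mod}, and generalizes immediately to any concave non-decreasing cap in place of $\min(k,\cdot)$. The paper's version, by contrast, is fully self-contained at the level of the specific function and gives a more concrete picture of when coverage saturates. Your handling of the remaining pieces --- positivity and monotonicity of $f$, the degenerate cases $d=0$ and $\alpha,\beta\geq k$, and the final appeal to Lemma \ref{lemma:lin} to pass from one sub-characteristic to the weighted sum in Equation \ref{eq:coverage_function} --- matches what the paper needs and closes the argument.
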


\begin{proof}

$\forall X \subseteq \Omega$ and $a, b \subseteq \Omega \setminus X$ and $a \neq b$ \\

We consider an alternative definition of submodularity: $f$ is submodular $\iff$

\begin{align}
    f(X \cup \{a\}) - f(x) \geq f(X \cup \{a, b\}) - f(X \cup \{b\}) \label{eq:alt_sub_mod}
\end{align}

We define

\begin{align}
    \Delta a &= f(X \cup \{a\}) - f(X) \label{eq:definition_delta_a} \\
    \Delta ba &= f(X \cup \{a, b\}) - f(X \cup \{b\}) \label{eq:definition_delta_ba}
\end{align}
The first equation describes the increment in coverage after adding practice $a$ to set $X$, and the second equation, the increment of coverage after adding again $a$ but to set $X \cup \{b\}$ for some practice $b$. Submodularity captures the diminishing increment in coverage as we expand $X$. Then from Equation \ref{eq:alt_sub_mod} $f$ is submodular $\iff \Delta a \geq \Delta ba$.  \\

We will show that $\Delta a \geq \Delta ba$.

\begin{lemma}
\label{lemma:f_union_x_y}
The coverage function $f$ as defined in Equation \ref{eq:coverage_function} (for one sub-characteristic) satisfies $f(X \cup Y) \geq f(X)$ for any $X,Y \subseteq \Omega$.

\end{lemma}

\begin{proof}
First we rewrite $f$ as follows:
\begin{align*}
\label{eq:s}
&f(X) = \min(k, s(X))
\\
&s(X) = \sum_{p \in X}  u(p,c)
\end{align*}

Now, since the function $s$ is additive and positive, we have:
\begin{align*}
f(X \cup Y) &= min(k, s(X \cup Y)) \\ 
             &= min(k, s(X) + s(Y)) \geq min(k, s(X)) = f(X)
\end{align*}
\end{proof}

We consider the following exhaustive list of cases by splitting on either $f(\cdot) < k$ or $f(\cdot) = k$ (by definition \ref{eq:coverage_function} $f$ cannot be $> k$):

First we consider $f(X) = k$, the coverage function is saturated by set $X$, so adding more practices will not change its value, it follows then that all increments are 0. Formally,
by lemma \ref{lemma:f_union_x_y}  $f(X \cup \{a\}) \geq k$ and by definition of $f$, $f(X \cup \{a\}) \leq k $ from which it follows that $f(X \cup \{a\})=k$, then $\Delta a = 0$. By the same argument, $f(X \cup \{b\})=k$ and $f(X \cup \{a,b\})=k$ from which it follows that $\Delta ba = 0$. Therefore $\Delta a = \Delta ba = 0$.

Next we consider $f(X) < k$ and $f(X \cup \{b\})=k$. Adding a practice $a$ to the set $X$ produces some or no increment of $f$ but, adding the same practice $a$ to $X \cup \{b\}$ produces no increment since $X \cup \{b\}$ saturates $f$, it follows then that $\Delta a \geq \Delta ba$. Formally, by lemma \ref{lemma:f_union_x_y} and the fact that $\forall_{Z \subseteq \Omega} f(Z) \leq k$ , and definition \ref{eq:definition_delta_ba}, $\Delta ba=0$ and since by definition \ref{eq:definition_delta_a} and lemma \ref{lemma:f_union_x_y} $\Delta a \geq 0$ then it follows $\Delta a \geq \Delta ba$.

We now consider the case where $f(X) < k$ and $f(X \cup \{b\}) < k$ and $f(X \cup \{a\}) = k$.  Adding $a$ to $X$ produces an increment that saturates $f$, it follows then that adding $a$ to $X \cup \{b\}$ also saturates $f$, but with a smaller or equal increment due to the contribution of $b$. This means that $\Delta a \geq \Delta ba$. Formally, by lemma \ref{lemma:f_union_x_y} and the fact that $\forall_{Z \subseteq \Omega} f(Z) \leq k$ we have  $f(X \cup \{a, b\}) = k$. Also, by definition \ref{eq:definition_delta_a} $\Delta a = k - f(X)$, and by definition \ref{eq:definition_delta_ba} $\Delta ba = k - f(X \cup \{b\})$ then by lemma \ref{lemma:f_union_x_y} we have $f(X \cup \{b\}) \geq f(X)$, hence $\Delta a \geq \Delta ba$.

Finally we consider the case where $f(X) < k$ and $f(X \cup \{b\}) < k$ and $f(X \cup \{a\}) < k$. Which means that $s(X \cup \{b\}) < k$ and $s(X \cup \{a\}) < k$, then:
\begin{align*}
        \Delta a &= f(X \cup \{a\}) - f(X) \\
                 &= s(X \cup \{a\}) - s(X) \\
                 &= s(X) + s(\{a\}) - s(X) \\
                 &= s(\{a\})
\end{align*}
 and also:
 
\begin{equation}   
\label{ba=f-s}
        \Delta ba = f(X \cup \{a, b\}) - s(X \cup \{b\})
\end{equation}

We now consider two complementary cases: when $f(X \cup \{a, b\}) = k$ and when $f(X \cup \{a, b\}) < k$. \\

In the first case, we have that $s(X \cup \{a, b\}) \geq k$ and then by definition \ref{eq:definition_delta_ba} we have ${\Delta ba = k - s(X \cup \{b\})}$. From this, it follows that:
\begin{align*}
    s(X \cup \{a, b\}) & \geq k \\
    s(X \cup \{b\}) + s(\{a\}) & \geq k \\
    s(\{a\}) & \geq k - s(X \cup \{b\})
\end{align*}

hence $\Delta a \geq \Delta ba.$  \\

Finally, when $f(X \cup \{a, b\}) < k$ we have:

\begin{align*}
    \Delta ba &= s(X \cup \{a, b\}) - s(X \cup \{b\}) \\
              &= s(a) \\
              &= \Delta a
\end{align*}

hence $\Delta a \geq \Delta ba $. 

We have proved that $\Delta a \geq \Delta ba $ for all the cases, it follows then that the function $f$ for one single sub-characteristic as defined in equation \ref{eq:fc} is submodular. Applying lemma \ref{lemma:lin} it follows that the function $f$ is also submodular for the general case of more than one sub-characteristic, which concludes the proof.

\end{proof}

\section{Quality sub-characteristics definitions}
\label{appendix:subchar_definitions}

\subsubsection{Utility}
\begin{description}
    \item \textbf{Accuracy} — The size of the observational error (systematic and random) of an ML system.
    \item \textbf{Effectiveness} — The ability of an ML system to produce a desired result on the business task it is being used for.
    \item \textbf{Responsiveness} — The ability of an ML system to complete the desired task in an acceptable time frame.
    \item \textbf{Reusability} — The ability to reuse the same ML system without any change, for another business case.
\end{description}

\subsection{Economy}
\begin{description}
    \item \textbf{Cost-effectiveness} — The extent to which an ML System achieves the desired relationship between costs and overall impact.
    \item \textbf{Efficiency} — The ability to avoid wasting resources (computational, human, financial, etc.) in order to perform the desired task effectively.
\end{description}

\subsection{Robustness}

\begin{description}
    \item \textbf{Availability} — The probability that the system will be up and running and able to deliver useful services to users.
    \item \textbf{Resilience} — The extent to which an ML system can provide and maintain an acceptable level of service in the face of technical challenges to normal operation.
    \item \textbf{Adaptability} — The extent to which an ML system can adapt to changes in the production environment, always providing the same functioning level.
    \item \textbf{Scalability} — The extent of an ML system to handle a growing amount of work by adding resources to the system.
\end{description}

\subsection{Modifiability}
\begin{description}
    \item \textbf{Extensibility} — The ease with which a system can be modified, in order to be used for another purpose.
    \item \textbf{Maintainability} — The ease with which activities aiming at keeping an ML system functional in the desired regime, can be performed.
    \item \textbf{Modularity} — The extent to which an ML system consists of components of limited functionality that interrelate with each other.
    \item \textbf{Testability} — The extent with which an ML system can be tested against expected behaviours.
\end{description}

\subsection{Productionizability}

\begin{description}
    \item \textbf{Deployability} — The extent to which an ML system can be deployed in production when needed.
    \item \textbf{Repeatability} — The ease with which the ML lifecycle can be repeated.
    \item \textbf{Operability} — The extent to which an ML system can be controlled (disable, revert, upload new version, etc.).
    \item \textbf{Monitoring} — The extent to which relevant indicators of an ML system are effectively observed and integrated in the operation of the system.
\end{description}

\subsection{Comprehensibility}
\begin{description}
    \item \textbf{Discoverability} — The extent to which an ML system can be found (by means of performing a search in a database or any other information retrieval system).
    \item \textbf{Readability} — The ease with which the code of an ML system can be understood.
    \item \textbf{Traceability} — The ability to relate artifacts created during the development of an ML system to how they were generated.
    \item \textbf{Understandability} — The ease with which the implementation and design choices of an ML system can be understood.
    \item \textbf{Usability} — The extent to which an ML system can be effectively used by users.
    \item \textbf{Debuggability} — The extent to which the inner workings of the ML system can be analyzed in order to understand why it behaves the way it does.
\end{description}

\subsection{Responsibility}
\begin{description}

    \item \textbf{Explainability} — The ability to explain the output of an ML system.
    \item \textbf{Fairness} — The extent to which an ML system prevents unjust predictions towards protected attributes (race, gender, income, etc).
    \item \textbf{Ownership} — The extent to which there exists people appointed to maintaining the ML System and supporting all the relevant stakeholders.
    \item \textbf{Standards Compliance} — The extent to which applicable standards are followed in the ML system.
    \item \textbf{Vulnerability} — The ease with which the system can be (ab)used to achieve malicious purposes.
\end{description}

\section{Details on the scoring of the vectors}
\label{appendix:scoring}

\subsection{The scoring instructions}
\label{appendix:scoring_instructions}

The instructions given to the annotators for the scoring are the following: \textit{Based on your experience, please rate the influence of a software engineering best practice on the quality model \textit{sub-characteristics} on the following scale: irrelevant (0), weakly contributes (1), contributes (2), strongly contributes (3) and addresses (4)}.

\section{Practice sets}

We provide the set of 41 internal practices along with their descriptions in \ref{appendix:internal_practices}, the removal of overlapping practices when combining sets in \ref{appendix:overlapping} and a description of the optimal practice set of Table \ref{table:practices_max_quality} in \ref{appendix:explanation_of_best_practices}. The actual external practices can be found in \cite{Rubric}, \cite{Amershi}, \cite{Serban} and \cite{se_ml_website}.

\subsection{Internal Practices}
\label{appendix:internal_practices}
We split the practices in categories based on their topic and we provide a short explanation for each one.

\subsubsection{Reproducibile ML Lifecycle}

\begin{enumerate}
    
    \item \textit{Logging of metadata and artifacts (hyperparams, code version, data used to train, model binary, etc)}

    Logging of all the relevant information which can help to reproduce the ML model.
This can be helpful to reproduce a build model artefact, and also understand what went wrong in case of issues.
Extra thought should be given to versioning the actual runtime environment (see Use of containerized environment)
and any randomness seeds.

    \item \textit{Code versioning}

You make use of software configuration management tools, such as Git, to version your code and be able to go back to
previous versions at any given point in time. If using more volatile code like notebooks, make sure to save the actual
code or the latest committed version and diff.

    \item \textit{Data versioning}
    
You version your data at different stages of processing (i.e. raw, after train/test split, after preprocessing),
automatically. You can retrieve it easily using a unique identifier to a storage system.
You don’t overwrite versioned data and keep it for as long as the model’s lifetime.

    \item \textit{Model versioning}

    Keep track of model versions in production, to understand which model achieved which predictions and evaluation results.

    \item \textit{Write documentation about the ML system}

    You document your ML system and make the documentation discoverable
(using a centralized portal or inside the code itself).
You document both the modeling assumptions and the rationale for them,
e.g. the reasoning for certain decisions (why certain features are bucketized, why some of the features are excluded,
what is the intuition behind a custom loss function).
On top of that, you also document engineering and installation instructions, e.g. all the necessary information that a
practitioner needs to know in order to be able to train, evaluate, deploy, etc. the ML system.

    \item \textit{The ML system has a clear owner}

    Having clear scope around which aspects of the ML system are owned by each individual or team. For example if an
ML system is being used by multiple teams for a variety of use cases, who owns the monitoring of the system’s
performance for each of those use cases?

    \item \textit{Use of containerized environment}

    You use virtualization to create reproducible environments for both training and deployment.
Any change to the environment is written in the configuration file,
and pinned if necessary. This ensures that all the necessary dependencies and settings for both training and deployment
are pre-installed in the environment, so no extra effort is required for this.

\end{enumerate}

\subsubsection{Monitoring \& Alerting}

\begin{enumerate}[resume]
    \item \textit{Model Performance Monitoring}

    Monitor the ML and business related aspects of your model in production, including:

\begin{itemize}
    \item Predictions are created and are similar to the testing distribution
    \item Predictions are correlated with business metrics
    \item Any proxy metric which can be relevant
\end{itemize}

    \item \textit{Monitor Feature Drift}

    You hold invariants of your features as they were distributed when the model was created,
and monitor production data for any drift that can cause your model to under-perform.

    \item \textit{Monitor Feature Parity}

    You monitor parity between online and offline features in real-time and alert when changes occur.

    \item \textit{Alerting}

    Make sure to have alerts (with carefully chosen alerting thresholds) for all the key indicators of an ML system.

\end{enumerate}
\subsubsection{Continuous Integration \& Continuous Delivery}

\begin{enumerate}[resume]
    \item \textit{Automate the ML Lifecycle}
    
    Every time a change is made in an ML system and the model needs to be re-deployed, the whole ML lifecycle needs to be
repeated. If this includes many manual steps, there are chances for errors and hence deploying a problematic system
version. Moreover, running steps manually takes time and leads to resource waste. A way to mitigate these issues is to
automate the whole ML lifecycle, e.g. data collection, feature engineering, training, evaluation, prediction threshold
setting and deployment. Whenever a change is made in the system’s code, the pipeline is triggered and a new version of
the system is automatically deployed into production. Of course there are cases where it is desired to manually inspect
the model before deployment (possible in business critical applications) so in such cases, it is fine to leave the
evaluation and deployment step outside of the automated pipeline.

    \item \textit{Automated tests}

    You test your code and data, both for acceptance and for code correctness.
You use tools like code coverage to validate your code is well-tested. You unit test the system’s components
to make sure each of them functions properly in isolation and you create integration tests for the whole pipeline
to make sure all the components work well together. Every time you make changes to the code, you run the tests to
make sure that the system works properly.

    \item \textit{Fast development feedback loops}

    It is preferable to develop an ML system in small iterations without too many changes happening in each iteration.
This helps with debugging as there are fewer things that can go wrong and it is also simpler to understand what
worked and what did not when AB testing a certain ML system version. In order to have fast development cycles,
it is important that we can make fast changes, for example quickly add features or change models and retrain the system.
\end{enumerate}

\subsubsection{High Quality Features}

\begin{enumerate}[resume]
    \item \textit{Achieve Feature Parity}
    
    You make sure that online feature calculation yields the same results as your offline features.
    \item \textit{Keeping the historical features up to date}
    
    Regularly update historical features that might need pre-computations, such that the model always uses up-to-date
features.
    \item \textit{Remove redundant features}
    
    In Machine Learning, unnecessary complexity is not desirable as it makes the maintainability of the ML systems harder,
without improving their performance.  One typical way to add complexity in an ML system without improving the
performance is by adding redundant features, e.g. features which do not add predictive value. To avoid that,
feature selection procedures should be applied to make sure the system is as simple as possible without compromising
the performance.
    \item \textit{Input Data Validation}

    You apply sanity checks on the data with which the ML model is being trained and evaluated on, to make sure they are
as expected.

    \item \textit{Degenerate Feedback Loops}

    For the cases where the ML system’s features are constantly updated and when those features are influenced by the model
itself, we might end up with wrong experimentation results if we do not do preventive actions.

    \item \textit{Dealing with and preventing the consequences of the dependencies created when using an output of a model as input to another.}

    When a model’s output is being used as input to another model, a small error can be propagated from the source to the
target model and have a large negative impact.
Practices which help to deal with and prevent such negative consequences:
\begin{itemize}
    \item Alerting to the target model that the input of the source model might be broken
    \item If you own the target model, monitor your feature (the one coming from the source model) distribution
    \item Automated retraining and deployment pipeline (in case of an error, fix it quickly)
\end{itemize}

\end{enumerate}
\subsubsection{Ensure long-lasting commercial value}

\begin{enumerate}[resume]
    \item \textit{Peer Code Review}
    
    You request your peers to review your code before going to production. This includes both data processing
and model training code.

    \item \textit{Bot Detection}

    You detect and filter out instances made by bots from training and production data.
    \item \textit{ML evaluation metric correlates with the business metric}

    When developing and deploying an ML system we need to make sure that the offline evaluation metric is a good proxy
for the business metric used to evaluate the model in production. This means that if we observe a certain improvement
in the offline evaluation metric, we should also observe this improvement on the production metric. In this way we are
able to know which of the improvements we apply in the model might be promising, before AB testing them.

    \item \textit{Mimic production setting in offline evaluation}

    The offline evaluation of an ML system should mimic the production environment. For example, if a certain feature
during production has null values, then the same feature should also have null values in the offline test set. If
online assumptions are not imposed during offline evaluation, we risk having an optimistic offline performance.

    \item \textit{Compare with a baseline}

    You compare your model with a simple baseline to ensure ML techniques are actually valuable as well as to benchmark
the ML system.
The simple baseline depends on your model, but it can range anything from a simple heuristic through a
regular expression to a simple ML model.

    \item \textit{Impact of model staleness is known \cite{Rubric}}

    Many production ML systems encounter rapidly changing, non-stationary data. Examples include content recommendation
systems and financial ML applications. For such systems, if the pipeline fails to train and deploy sufficiently
up-to-date models, we say the model is stale. Understanding how model staleness affects the quality of predictions
is necessary to determine how frequently to update the model. If predictions are based on a model trained yesterday
versus last week versus last year, what is the impact on the live metrics of interest? Most models need to be updated
eventually to account for changes in the external world; a careful assessment is important to decide how often to
perform the updates

    \item \textit{Write Modular and Reusable Code}

    You write modular and reusable code, doing your best at breaking up large processes into smaller, cohesive chunks. You enable others to re-use your code fragments by generalizing them and pushing them to a centralized code repository.
To promote reusability of your code, make sure you create your functions in ways that enable extensibility, not just modification; these include:

\begin{itemize}
    \item Program by interface (e.g. array-like) and not by implementation (e.g. using isinstance)
    \item Use the Dependency Injection pattern and follow the Open-Closed Principle
    \item Prefer meaningful arguments (e.g. a callable, data structure) over booleans (e.g. \texttt{use\_new\_schema}) when extending functionality
\end{itemize}

\item \textit{Bias Assessment}

You assess the differences in the prediction error of the ML systems across sensitive variables (such as country) to
detect any bias issues.

\item \textit{Bias Mitigation}

You take actions to mitigate the differences in prediction errors across sensitive variables of interest, to make sure
that they are within acceptable thresholds.

\item \textit{Establish clear success metric before model design}

The success criterion for an ML system is something that directly affects the model design choices taken during
development. For example, the optimal hyper-parameters for a given success metric might be suboptimal for a
different metric and need to be re-tuned. This is why it is important to think about the evaluation metrics
(both offline and online) before the design of the model, and take them into account in all the modeling choices.

\item \textit{Error analysis}

Analyze all the error types in the ML system to understand why it failed and what fixes need to be done to prevent such
errors in the future.

\item \textit{Keep the model up to date}

Make sure that the ML system is trained on data points that are representative of the current production environment.

\item \textit{ML system audit by domain experts}

Domain experts periodically audit the ML system to make sure that it complies with any applicable standards.

\end{enumerate}

\subsubsection{A/B Testing}

\begin{enumerate}[resume]
    \item \textit{A/B test all model changes}
    
    Assess the impact of all the model changes through an A/B test.

    \item \textit{Do not test a model which is not promising in offline evaluation}

    Avoid conducting redundant and costly AB tests when the offline evaluation does not indicate significant differences
with the current production model.

    \item \textit{Understand when A/B assumptions are violated and seek alternative experimentation framework}
    
Always make sure that key AB testing assumptions
(for example the independence assumption)
are met before conducting an experiment. In case they are not met, find alternative experimentation methods.

\end{enumerate}    

\subsubsection{Infrastructure}

\begin{enumerate}[resume]
    \item \textit{Unified environment for all lifecycle steps}

    If all the lifecycle steps of a certain ML system are developed in a unified environment it simplifies code reuse,
shared engineering techniques and patterns, as well as shared code style conventions, development tools, etc.
For instance, if the training and serving environments are developed in different languages, we risk re-implementing
the same functionality (e.g. pre or post-processing) in both environments separately.

    \item \textit{Reuse data, supported internal tooling and solutions}

    Prefer to use org-supported tooling and solutions for your ML needs
(e.g. serving platforms, features stores, model registries).
This enables better sharing of your work with others, makes managing of infrastructure easier and keeps you
updated with any enhancements. Moreover, data re-usage especially in the case of costly annotated labels is
quite beneficial and avoids unnecessary effort and cost.

    \item \textit{Turn experimental code into production code}

    You move your experimental code (usually notebooks) to a runnable, organized and git-committed code when you're done
training and before/while you go to production. Bonus point if you validate that the production code you ported
actually re-creates the model as expected.
In the case of using Jupyter notebook as an exploration environment, there are simple ways to
turn the notebooks
into scripts
and then further re-design the software.

    \item \textit{Shadow Deployment}

    Ability to deploy an ML system without exposing its predictions to the users.
One way this can be useful is to understand the technical performance impact of deploying a model,
e.g. is there any delay on the page load where the model is being used, what is the monetary impact of the delay
to the customer?

    \item \textit{Canary Deployment}

    Ability to gradually roll out an ML system, such that any negative impact due to its deployment is mitigated.
\end{enumerate}

Moreover, after the analysis of the internal practices in Section \ref{sec:analyzing_best_practices}, we added the following practices, to fill in the identified technical gaps:

\begin{itemize}
    \item \textit{Request an ML system security inspection}

    An internal process at Booking.com, where security engineers are reviewing ML systems and highlight potential vulnerabilities. The engineers provide recommendations to the ML system owner on how to address potential issues. 
    \item \textit{Latency and Throughput are measured and requirements are defined}

    In order for ML systems to be responsive, latency and throughput should meet the production criteria. Being able to measure the actual latency and throughput and compare it with the business requirements can guide the ML practitioner on the exact improvements to be made in order for the system to be responsive.
    
    \item \textit{Register the ML system in an accessible registry} 

    Registering an ML system and its related documentation in an accessible  ML registry, can enable its discoverability by interested parties across the organization. 
\end{itemize}

\subsection{Removing Overlapping Practices}
\label{appendix:overlapping}

When combining sets of practices, we remove practices that are overlapping with each other in order to avoid duplicate contributions. The practices removed from the analyses in the Sections \ref{sec:how_many_practices} and \ref{section:which_are_the_best} are the provided below.

Removed from the external sets, when combining all the external sets together in Section \ref{sec:analyzing_best_practices}:

\begin{itemize}
    \item Model Versioning \cite{Amershi}
    \item Data Versioning \cite{Amershi}
    \item Automated Tests \cite{Amershi} 
    \item Model specs are reviewed and submitted. \cite{Rubric}
\end{itemize}

Removed from the internal set (\ref{appendix:internal_practices}), when combining the internal and the external sets in Section \ref{sec:how_many_practices}:

\begin{itemize}
    \item Automate the ML lifecycle
    \item Data Versioning
    \item Code versioning
    \item Model Performance Monitoring
    \item Automated Tests
    \item Remove Redundant Features
    \item Peer Code Reviews
    \item Monitor Feature Parity
    \item Offline Metric Correlates with Production Metric
    \item Impact of model staleness is known
    \item Error Analysis
    \item Input Data Validation
    \item Bias Assessment
    \item Logging of Metadata and Artifacts
    \item Shadow Deployment
    \item ML System Audit by Domain Experts
\end{itemize}

\subsection{Description of the practices which maximize quality}
\label{appendix:explanation_of_best_practices}

We provide a description for the best practices in Table \ref{table:practices_max_quality}.

\begin{itemize}
\item Versioning for Data, Model, Configurations and Scripts: 

Versioning in machine learning involves more components than in traditional software - among the executable code we have to store the training and testing data sets, the configuration files and the final model artifacts. Storing all information allows previous experiments to be reproduced and re-assessed. Moreover, it helps auditing, compliance and backward traceability and compatibility.

\item Continuously Monitor the Behaviour of Deployed Models:

Monitoring plays an important role in production level machine learning. Because the performance between training and production data can vary drastically, it is important to continuously monitor the behaviour of deployed models and raise alerts when unintended behaviour is observed.

\item Unifying and automating ML workflow: 

The pipelines created by ML practitioners are automated, supporting
training, deployment, and integration of models with the
product they are a part of.

\item Remove Redundant Features  (originally named as: No feature's cost is too much):

It is not only a waste of computing resources, but also an ongoing
maintenance burden to include features that add only
minimal predictive benefit.

\item Continuously Measure Model Quality and Performance:

For long running experiments – such as training deep neural networks for object recognition or language tasks – it is important to detect errors as early as possible in the training process. Moreover, it is important to continuously check model quality and performance on different benchmarks which match the production environment as close as possible.

\item All input feature code is tested: 

Feature
creation code may appear simple enough to not need unit
tests, but this code is crucial for correct behavior and so
its continued quality is vital. Bugs in features may be
almost impossible to detect once they have entered the data
generation process, especially if they are represented in both
training and test data.

\item Automate Model Deployment: 

Automated deployment involves automatically packing the model together with its dependencies and ‘shipping’ it to a production server – instead of manually connecting to a server and perform the deployment.

\item Use of Containarized Environment: see Appendix \ref{appendix:internal_practices}

\item Unified Environment for all Lifecycle Steps: see Appendix \ref{appendix:internal_practices}

\item Enable Shadow Deployment: 

Instead of deploying a model straight into production, one can assess its quality and performance using the data from production without allowing the model to make final decisions. This involves deploying the a model to “shadow” or “compete” with the model in production, and redirect the data to both models. The model that is already deployed will still handle all decisions, until the shadow model is assessed and promoted to production.

\item The ML system outperforms a simple baseline (originally named as: A simpler model is not better):

Regularly
testing against a very simple baseline model, such as a linear
model with very few features, is an effective strategy both
for confirming the functionality of the larger pipeline and
for helping to assess the cost to benefit tradeoffs of more
sophisticated techniques.

\item Have Your Application Audited:

In order to gain new insights into your development life-cycle and compare with other applications and regulations, it is recommended to have your application audited by an external, independent and trustable actor. Sharing the outcomes of the audit and a strategy for solving the emergent issues can increase transparency and trust.

\item Monitor model staleness (originally named as: Models are not too stale):

Monitoring how
old the system in production is, using the prior measurement
as a guide for determining what age is problematic enough
to raise an alert.
Surprisingly, infrequently updated models also incur a
maintenance cost. Imagine a model that is manually retrained once or twice a year by a given engineer. If that
engineer leaves the team, this process may be difficult to
replicate – even carefully written instructions may become
stale or incorrect over this kind of time horizon.

\item Use A Collaborative Development Platform:

Collaborative development platforms provide easy access to data, code, information, and tools. They also help teams to keep each other informed, make and record decisions, and work together asynchronously or remotely.

Consistent use of a collaborative development environment implies that all team members make use of the environment for all their tasks and that they follow the same conventions in cases where similar tasks can possibly be carried out in different ways in the environment.

\item Explain Results and Decisions to Users:

Machine learning systems involve highly complex between data, algorithms and models. As a result they are often difficult to understand, even for other experts.

In order to increase transparency and align the application with ethics guidelines, it is imperative to inform users on the reasons why a decision was made. For example, the EU GDPR law, as well as the Credit score in the USA, require the right to an explanation for automated decision making systems.

\item The ML system has a clear owner: see Appendix \ref{appendix:internal_practices}.

\item Assign an Owner to Each Feature and Document its Rational:

In a large data set, with multiple features that are composed from distinct data attributes, it is hard to keep track and understand all features. By assigning an owner and documenting each feature, they become easier to maintain and comprehend.

\item Communicate, Align, and Collaborate With Others: 

The system that your team develops is meant to integrate with other systems within the context of a wider organization. this requires communication, alignment, and collaboration with others outside the team.

\item Perform Risk Assessments:

Unintended negative effects of an machine learning application should not be detected after they have happened in production, nor should we wait for a (rare and expensive) third-party audit to detect them in a late stage of development.

By conducting an internal risk assessment, the expertise of your organisation is leveraged to identify and mitigate negative impacts as early as possible.

\item Peer Review Training Scripts:

Errors and bugs can easily slip in the code during development. In order to enhance code quality, techniques from standard programming – such as peer review – can also be applied to machine learning code.

Peer review is a well known technique, in which members of the team review the code between themselves.

\item Establish Responsible AI Values:
Machine learning applications can severely impact human lives. Avoiding negative impacts, even without malicious intent, requires all stakeholders to operate according to the same ethical values.

\item Document important information about the ML system: see Appendix \ref{appendix:internal_practices}.     
\item Write Modular and Reusable Code: see Appendix \ref{appendix:internal_practices}.       
\item Computing performance has not regressed: 

While measuring computational performance is
a standard part of any monitoring, it is useful to slice
performance metrics not just by the versions and components
of code, but also by data and model versions. Degradations
in computational performance may occur with dramatic
changes (for which comparison to performance of prior
versions or time slices can be helpful for detection) or in
slow leaks (for which a pre-set alerting threshold can be
helpful for detection)
\end{itemize}

\section{Code}
In \ref{appendix:prio_framework} we provide the code with the core functionality for the prioritization framework and in \ref{appendix:analysis_code} the code to reproduce the results mentioned in various sections of our work.

\subsection{The prioritization framework}
\label{appendix:prio_framework}

\begin{minted}{python}

import itertools
from typing import Callable, Set


def get_n_practices_brute_force(
    practice_score_f: Callable[[str, str], int],
    practices_available: Set[str],
    sub_chars: Set[str],
    num_practices: int,
    covered_score: int,
) -> Set[str]:
    """
    :param practice_score_f:
        Function returning the score for a sub characteristic and practice. Needs to return
        a score for each of `sub_chars`x`practices_available`
    :param practices_available:
        Set of all available practices.
    :param sub_chars:
        Set of selected sub characteristic we wish to cover
    :param num_practices:
        Budget of practices we can implement
    :param covered_score:
        The score we consider as covering a given sub characteristic.
    :return:
        A set of practices. Caution: the resulting list is not necessarily a feasible solution covering
        all sub-characteristics. If a feasible solution does not exist, the next best is returned.
    """
    best_solution = None
    best_solution_score = None

    for selected_practices in itertools.combinations(
        practices_available, num_practices
    ):
        sub_char_points = {sub_char: covered_score for sub_char in sub_chars}

        for practice in selected_practices:
            for sub_char in sub_chars:
                sub_char_points[sub_char] -= practice_score_f(sub_char, practice)
                sub_char_points[sub_char] = max(0, sub_char_points[sub_char])

        score_achieved = sum(sub_char_points.values())

        if best_solution_score is None:
            best_solution = selected_practices
            best_solution_score = score_achieved
        else:
            if best_solution_score > score_achieved:
                best_solution_score = score_achieved
                best_solution = selected_practices

    return set(best_solution)


def get_n_practices_greedy(
    practice_score_f: Callable[[str, str], int],
    practices_available: Set[str],
    sub_chars: Set[str],
    num_practices: int,
    covered_score: int,
) -> Set[str]:
    """
    :param practice_score_f:
        Function returning the score for a sub characteristic and practice. Needs to return
        a score for each of `sub_chars`x`practices_available`
    :param practices_available:
        Set of all available practices.
    :param sub_chars:
        Set of selected sub characteristic we wish to cover
    :param num_practices:
        Budget of practices we can implement
    :param covered_score:
        The score we consider as covering a given sub characteristic.
    :return:
        A set of practices. Caution: the resulting list is not necessarily a feasible solution covering
        all sub-characteristics. If a feasible solution does not exist, the next best is returned.
    """
    covered: Set[str] = set()
    selected_practices: Set[str] = set()
    left_to_cover = {sub_char: covered_score for sub_char in sub_chars}
    num_rounds = 0

    while covered != sub_chars and num_rounds < num_practices:
        num_rounds += 1

        scored_practices = []

        for practice in practices_available:
            scaled_score = 0
            for sub_char in sub_chars.difference(covered):
                scaled_score += practice_score_f(sub_char, practice)
            scored_practices.append((practice, scaled_score)),

        # greedy selection by the most r points a practice achieves on all sub characteristics
        selected_practice, _ = sorted(scored_practices, key=lambda x: -x[1])[0]

        # Apply the selected practice to each sub-characteristic
        for sub_char in list(left_to_cover.keys()):
            left_to_cover[sub_char] -= practice_score_f(sub_char, selected_practice)

            # Once a sub characteristic is covered, add it to the set of covered once,
            # we do not need to improve it further
            if left_to_cover[sub_char] <= 0:
                covered.add(sub_char)
                del left_to_cover[sub_char]

        practices_available.remove(selected_practice)
        selected_practices.add(selected_practice)

    return selected_practices

\end{minted}

\subsection{Analysis code}
\label{appendix:analysis_code}

\begin{minted}{python}
import os
from typing import Callable, Dict, List, Set

import matplotlib.pyplot as plt
import numpy as np
import pandas as pd
import seaborn as sns
from scipy.stats import pearsonr
from sklearn.metrics import cohen_kappa_score
from tqdm import notebook
import random

from prioritization_framework import get_n_practices_brute_force, get_n_practices_greedy


def read_public_scores(csv_path: str) -> pd.DataFrame:
    scores = pd.read_csv(csv_path, sep="\t")

    scores.loc[:, "Score Annotator A"] = scores.loc[:, "Score Annotator A"].astype(int)
    scores.loc[:, "Score Annotator B"] = scores.loc[:, "Score Annotator B"].astype(int)
    scores["Score"] = scores[["Score Annotator A", "Score Annotator B"]].mean(axis=1)

    return scores


def read_public_and_private_scores(csv_path: str, normalise: bool = True) -> pd.DataFrame:

    scores = pd.read_csv(csv_path, sep="\t", header=0)

    if normalise:
        return normalize_scores(dataframe_w_scores=scores, max_score=4.0)
    else:
        return scores


def compute_agreement_between_practitioners(
    weights: pd.DataFrame,
    prefix: str,
    agreement_func: callable,
) -> pd.DataFrame:
    data = {}
    cols = sorted(col for col in weights.columns if col.startswith(prefix))
    for col1 in cols:
        name1 = col1[len(prefix) :]
        pers = {}
        for col2 in cols:
            if col1 == col2:
                continue
            name2 = col2[len(prefix) :]
            pers[name2] = agreement_func(weights[col1], weights[col2])
        pers["mean"] = sum(pers.values()) / len(pers)
        data[name1] = pers
    return pd.DataFrame(data)


def kappa_(y1, y2):
    map_unsupported_floats = lambda x: int(x * 4)
    y1 = y1.map(map_unsupported_floats)
    y2 = y2.map(map_unsupported_floats)
    return cohen_kappa_score(y1, y2)


def kappa_large_gap(y1, y2):
    """Compute Cohen's Kappa with only meaningful gaps (a difference of more than 1 step).

    In order to do so correctly, whenever we see a difference of 1 step (0.25 in our vector of range [0,1])
    we randomly choose one of the scores and overwrite the other one, so they would be the same.

    By choosing at random we would approximate the same distribution of scores as the original so the Kappa
    score would still be meaningful (as it takes the score distribution as prior).
    """
    y1, y2 = y1.tolist(), y2.tolist()
    for i, (val1, val2) in enumerate(zip(y1, y2)):
        if abs(val1 - val2) < 0.5:
            new_val = random.choice([val1, val2])
            y1[i], y2[i] = new_val, new_val
    return kappa_(pd.Series(y1), pd.Series(y2))


def normalize_scores(dataframe_w_scores: pd.DataFrame, max_score: float) -> pd.DataFrame:
    dataframe_w_scores[dataframe_w_scores.columns[2:]] = dataframe_w_scores[dataframe_w_scores.columns[2:]].apply(
        lambda x: x / max_score
    )
    return dataframe_w_scores


def create_df_from_csvs(
    csvs: List[str],
    header_row_number: int,
    practice_col: str,
    weights_col: str,
    subchar_col: str,
) -> pd.DataFrame:
    """
    Creates dataframe from a list of csv files. It is important that the name of practice, weights and quality
    sub-characteristic columns are the same among the csvs.
    """
    dataframes = []
    for csv_file in csvs:
        loaded_df = create_dataframe_from_csv(
            csv_file,
            header_row_number=header_row_number,
            practice_col=practice_col,
            subchar_col=subchar_col,
            weights_col=weights_col,
        )
        dataframes.append(loaded_df)

    merged = dataframes[0]
    for dataframe in dataframes[1:]:
        merged = merged.join(dataframe, how="left")

    return merged


def create_dataframe_from_csv(
    vectors: str,
    header_row_number: int,
    practice_col: str,
    weights_col: str,
    subchar_col: str,
) -> pd.DataFrame:
    vectors_df = pd.read_csv(vectors, header=header_row_number)

    csv_name = extract_name_from_csv_path(vectors)
    new_weight_col_name = "weights_" + csv_name
    vectors_df.rename(columns={weights_col: new_weight_col_name}, inplace=True)

    vectors_df.set_index([subchar_col, practice_col], verify_integrity=True, inplace=True)

    return vectors_df[[new_weight_col_name]]


def extract_name_from_csv_path(csv_path: str):
    return os.path.basename(csv_path).partition(".")[0]


def get_cumulative_scales(weights: List[int]) -> List[int]:
    return np.insert(np.cumprod(weights), 0, 0)


def score_sub_chars(
    practice_score_f: Callable[[str, str], int],
    practices_available: Set[str],
    sub_chars: Set[str],
) -> Dict[str, int]:
    score_for_sub_char = {sub_char: 0 for sub_char in sub_chars}
    for sub_char in sub_chars:
        for practice in practices_available:
            score_for_sub_char[sub_char] += practice_score_f(sub_char, practice)
    return score_for_sub_char


def which_are_covered(
    score_per_subchar: Dict[str, int],
    covered_score: int,
) -> Dict[str, bool]:
    return {practice: score >= covered_score for practice, score in score_per_subchar.items()}


def plot_set_contributions(points_for_all_practices: dict, title: str, index: str) -> None:
    scale_cumulative = get_cumulative_scales(weights=[1, 2, 3, 4])

    plt.subplots(figsize=[12, 12])
    sns.set_context("poster")

    points_for_all_practices[index]["Sum of scores"].plot.barh(color="black")
    plt.axvline(
        scale_cumulative[-1],
        color="red",
        ls="--",
        label=f"Covered at {scale_cumulative[-1]} points",
    )
    title = title + "\n"
    plt.title(title)
    plt.xlabel("Weighted contribution score")
    plt.ylabel("")
    sns.despine()

    plt.legend()


def plot_quality_coverage(merged_practices: pd.Series) -> None:
    sub_chars = set(list(merged_practices.index.get_level_values(0).unique()))
    practices = set(merged_practices.index.get_level_values(1))
    covered_score = get_cumulative_scales(weights=[1, 2, 3, 4])[-1]

    practice_score_f = lambda sub_char, practice: merged_practices.loc[(sub_char, practice)]
    xs = []
    ys = []
    for num_practices in range(1, len(practices)+1):
        practices_to_adopt = get_n_practices_greedy(
            practice_score_f=practice_score_f,
            practices_available=practices.copy(),
            sub_chars=sub_chars.copy(),
            num_practices=num_practices,
            covered_score=covered_score,
        )

        scores = score_sub_chars(practice_score_f, practices_to_adopt, sub_chars)
        coverage_results = which_are_covered(scores, covered_score)
        not_covered = [sub_char for sub_char, is_covered in coverage_results.items() if not is_covered]

        xs.append(num_practices)
        ys.append((len(sub_chars) - len(not_covered)) / len(sub_chars))

        if len(not_covered) == 0:
            break

    plt.subplots(figsize=[12, 7.5])
    sns.set_context("poster")

    plt.plot(xs, ys, color="black")
    title = "Quality Coverage"
    plt.title(title)
    plt.xlabel("Number of practices chosen by greedy")
    plt.ylabel("Coverage Fraction")
    sns.despine()


def piecewise_linear_function(x: float) -> float:
    if x > 4 or x < 0:
        raise ValueError(f"Not supported domain of x = {x}")

    if x >= 0 and x <= 2:
        return x
    if x >= 2 and x <= 3:
        return x * 4 - 6
    if x >= 3:
        return x * 18 - 48


piecewise_linear_function = np.vectorize(piecewise_linear_function)


def convert(series: pd.Series) -> pd.Series:
    return piecewise_linear_function(series)


def randomly_permute(x: int, lb: int, up: int) -> int:
    return max(0, min(4, x + np.random.randint(lb, up)))


def sensitivity_analysis(
    N: int, name: str, permutation_f: Callable[[str, str], float], practices: pd.DataFrame
) -> np.array:
    results = []
    for i in notebook.tqdm(range(N), desc=name):
        randomly_permuted_score = practices["Score"].apply(lambda x: permutation_f(x))

        permuted_merged = practices.copy()
        permuted_merged["Score_Scaled"] = convert(randomly_permuted_score)

        sum_permuted = (
        permuted_merged
        .groupby("Subcharacteristics")["Score_Scaled"]
        .agg([np.sum])
        .sort_values(by="sum")
        )

        sum_original = (
        practices
        .groupby("Subcharacteristics")["Score_Scaled"]
        .agg([np.sum])
        .sort_values(by="sum")
    )
        ranked = (
        sum_permuted[["sum"]]
        .join(sum_original[["sum"]], rsuffix="_original")
        .rank()
        )
        results.append(pearsonr(ranked["sum"], ranked["sum_original"])[0])

    results = np.array(results)

    return results

\end{minted}
\newpage
\section{Sample Influence scores}
\label{appendix:u_p_c_scores}
To demonstrate the $u(p,c)$ scores, Table \ref{tab:scores_sample} shows one random sample from non-zero scores for each quality sub-characteristic. The full set of scores is omitted due to its size, but is available upon request.

\begin{table}
    \centering
\begin{tabular}{|l|p{.25\textwidth}|r|}
\hline
& Practice & Score \\
\hline
 accuracy             & Share a Clearly Defined Training Objective within the Team         &     1.5 \\
 \hline
 adaptability         & Run Automated Regression Tests                                     &     2   \\
 \hline
 availability         & Enforce Fairness and Privacy                                       &     0.5 \\
\hline

 cost-effectiveness   & Unifying and automating ML workflow                                &     4   \\
 \hline

 debuggability        & Enable Automatic Roll Backs for Production Models                  &     0.5 \\
 \hline

 deployability        & Provide Audit Trails                                               &     0.5 \\
 \hline

 discoverability      & Work Against a Shared Backlog                                      &     0.5 \\
 \hline

 effectiveness        & Training and serving are not skewed                                &     3   \\
 \hline

 efficiency           & Prediction quality has not regressed.                              &     1   \\
 \hline

 explainability       & Peer Review Training Scripts                                       &     2   \\
 \hline

 extensibility        & documentation                                                      &     2   \\
 \hline

 fairness             & Have Your Application Audited                                      &     2   \\
 \hline

 maintainability      & data versioning                                                    &     2   \\
 \hline

 modularity           & Use Continuous Integration                                         &     1.5 \\
 \hline

 monitoring           & Log Production Predictions with the Model's Version and Input Data &     0.5 \\
 \hline

 operability          & Use A Collaborative Development Platform                           &     2.5 \\
 \hline

 ownership            & Share Status and Outcomes of Experiments Within the Team           &     1.5 \\
 \hline

 readability          & The model is debuggable                                            &     1   \\
 \hline

 repeatability        & Model specs are unit tested.                                       &     3   \\
 \hline

 resilience           & mimic production setting in offline evalutation                    &     1   \\
 \hline

 responsiveness       & Automate Model Deployment                                          &     0.5 \\
 \hline

 reusability          & Automate Configuration of Algorithms or Model Structure            &     0.5 \\
 \hline

 scalability          & Run Automated Regression Tests                                     &     0.5 \\
\hline

 standards compliance & data versioning                                                    &     1   \\
\hline

 testability          & The data pipeline has appropriate privacy controls.                &     1   \\
\hline

 traceability         & Work Against a Shared Backlog                                      &     1.5 \\
\hline

 understandability    & All input feature code is tested.                                  &     3   \\
\hline

 usability            & Communicate, Align, and Collaborate With Others                    &     2.5 \\
\hline

 vulnerability        & Establish Responsible AI Values                                    &     0.5 \\
\hline
\end{tabular}
    \caption{Scores on random practices and sub-characteristics pairs}
    \label{tab:scores_sample}
\end{table}
\end{document}